\def\arraystretch{1.3} 
\newtcbox{\inlinecode}{on line,
  box align=base,
  colback=gray!10, 
  colframe=gray!10, 
  boxsep=0pt,
  left=2pt,
  right=2pt,
  top=1pt,
  bottom=1pt,
  enhanced,
  nobeforeafter,
  fontupper=\ttfamily\scriptsize
}
\def\typstscale{0.55}
\newcommand{\fu}{Dahlem Center for Complex Quantum Systems, Freie Universit{\"a}t Berlin, 14195 Berlin, Germany}
\newcommand{\fzj}{Institute for Theoretical Nanoelectronics (PGI-2), Forschungszentrum J\"ulich, 52428 J\"ulich, Germany}
\newcommand{\edinburgh}{School of Informatics, The University of Edinburgh, EH8 9AB Edinburgh, Scotland, UK}
\newtheorem{lemma}{Lemma}
\newtheorem{corollary}[lemma]{Corollary}
\newcommand{\Pn}{\mathcal{P}^n}
\newcommand{\Cn}{\mathcal{C}^n}
\newcommand{\Sp}{\operatorname{Sp}}
\newcommand{\StabGroup}{\mathcal{S}}
\newcommand{\CX}{CX}
\newcommand{\CZ}{CZ}
\newcommand{\CCZ}{CCZ}
\newcommand{\GL}{\text{GL}}
\newcommand{\Swap}{\textsc{Swap}}
\newcommand{\unitspace}{\mskip3mu}
\newcommand{\unit}[1]{\unitspace\mathrm{#1}}
\newcommand{\FF}{\mathbbm{F}}
\newcommand{\RR}{\mathbbm{R}}
\newcommand{\id}{\text{id}}
\newcommand{\iu}{\mathrm{i}\mkern1mu}
\crefname{equation}{Eq.}{Eqs.}
\crefname{figure}{Fig.}{Figs.}
\crefname{section}{Sec.}{Secs.}
\crefname{tabular}{Tab.}{Tabs.}
\crefname{table}{Tab.}{Tabs.}
\crefname{lemma}{Lemma}{Lemmata}
\Crefname{theorem}{Theorem}{Theorems}
\crefname{definition}{Definition}{Definitions}
\crefname{appendix}{App.}{Apps.}
\Crefname{equation}{Eq.}{Eqs.}
\Crefname{figure}{Fig.}{Figs.}
\Crefname{section}{Section}{Sections}
\Crefname{tabular}{Tab.}{Tabs.}
\Crefname{table}{Tab.}{Tabs.}
\Crefname{lemma}{Lemma}{Lemmata}
\Crefname{theorem}{Theorem}{Theorems}
\Crefname{definition}{Definition}{Definitions}
\Crefname{appendix}{Appendix}{Appendices}
\begin{document}
\setstretch{1}

\title{\rmfamily\large Hardware-tailored logical Clifford circuits for stabilizer codes }

\author{Eric J. Kuehnke}
\email{eric.kuehnke@fu-berlin.de}
\address{\fu}
\author{Kyano Levi}
\address{\fu}

\author{Joschka Roffe}
\address{\fu}
\address{\edinburgh}

\author{Jens Eisert}
\address{\fu}
\author{Daniel Miller}
\address{\fu}
\address{\fzj}

\date{\today}

\begin{abstract}
    Quantum error correction is the art of protecting fragile quantum information through suitable encoding and active interventions. 
    After encoding $k$ logical qubits into $n>k$ physical qubits using a stabilizer code, 
    this amounts 
    to measuring stabilizers, 
    decoding syndromes, and applying an appropriate correction.
    Although quantum information can be protected in this way, it is notoriously difficult 
    to manipulate encoded quantum data without introducing uncorrectable errors.
    Here, we introduce a mathematical framework for constructing hardware-tailored quantum circuits that implement any desired Clifford unitary on the logical level of any given stabilizer code. 
    Our main contribution is the formulation of this task as a discrete optimization problem.
    We can explicitly integrate arbitrary hardware connectivity constraints.
    As a key feature, our framework naturally incorporates an optimization over all Clifford gauges (differing only in their action outside the code space) of a desired logical circuit.
    In this way, we find, for example, fault-tolerant and  teleportation-free logical Hadamard circuits for the $\llbracket 8,3,2\rrbracket$ code.
    From a broader perspective, we turn away from the standard generator decomposition approach and instead focus on the holistic compilation of entire logical circuits, leading to significant savings in practice.
    Our work introduces both the necessary mathematics and open-source software 
    to compile hardware-tailored logical Clifford circuits for stabilizer codes.
\end{abstract}

\maketitle 

\section{Introduction}
The anticipated advantage of quantum computers has not yet been fully realized because decoherence and operational errors are still severely limiting their performance.
The most promising and at the same time widely accepted solution to overcome this challenge is presented by \emph{quantum error-correcting codes} (QECCs), in particular, by stabilizer QECCs, which constitute the by far most well-developed framework. 
Here, a carefully selected set of Pauli operators (the stabilizer generators) is repeatedly measured, thereby pushing the state of the quantum computer back toward the logical subspace~\cite{gottesman_stabilizer_codes_1997, calderbank_quantum_error_1998}. 
While quantum error correction has been a
well-established theoretical field for many years, it is only recently that emphasis has shifted
toward actually experimentally realizing core elements of stabilizer error correction.
For example, the possibility to ever extend the 
lifetime of a logical qubit by encoding it into more and more physical qubits has been experimentally confirmed~\cite{google_supressing_quantum_2023, google_quantum_error_2025},
including real-time decoding with millions of error correction cycles
\cite{google_quantum_error_2025}.
Moreover, various logical primitives have been implemented in the 
laboratory~\cite{egan_fault_tolerant_2021, 
erhard_entangling_logical_2021, 
postler_demonstration_fault_2022,
bluvstein_logical_quantum_2024, 
wang_fault_tolerant_2024,
menendez_implementing_fault_2024,
lacroix_scaling_logic_color_2024,
burton_genons_double_covers_2024,
ryananderson_high_fidelity_2024,
jin_iceberg_beyond_2025,
yamamoto_quantum_error_2025}.
With the fundamental principles of error correction thus being established, 
a remaining challenge in making quantum error correction practical is to 
lessen the burden arising from the daunting resource demands of 
logical operations.
This issue has to be addressed and tackled
from several perspectives. In particular, 
to this end, sophisticated compilation methods are 
urgently needed, especially  
in the form of methods that take experimental constraints such as limited qubit connectivities into account that are relevant for most 
physical platforms for quantum error correction.

This is, of course, not a new problem.
One of the first proposals for universal fault-tolerant quantum computation is \emph{surface code lattice surgery}~\cite{horsman_surface_code_2012}.
While its modular approach and conceptual simplicity offer a clear route to large-scale fault-tolerant quantum computers, 
surface code lattice surgery faces massive resource overheads: 
(i) the stabilizer generators must be measured increasingly often as the code size increases, which slows down computation,
(ii) many of the logical qubits are blocked; both to route the flow of data
and to effectively implement logical Clifford gates via parity measurements of logical multi-qubit Pauli operators~\cite{litinski_game_of_surface_codes_2019}, and (iii) it suffers from certain no-go theorems which limit all codes whose stabilizer generators are local in two dimensions~\cite{bravyi_no_go_theorem_2009, bravyi_classification_topologically_2013}.

Recent discoveries of good \emph{quantum low-density parity check} (qLDPC) codes~\cite{bravyi_homological_product_2014,
breuckmann_quantum_low_2021, 
panteleev_asymptotically_good_2022, 
leverrier_decoding_quantum_2023, 
bravyi_high_threshold_2024,
xu_constant_overhead_fault_tolerant_2024} have added an entirely new flavor to the problem. 
They elegantly 
circumvent these no-go results by dropping the locality assumption, motivating a rich and promising research program on generalized lattice surgery for qLDPC codes~\cite{cohen_low_overhead_2022,
cowtan_css_code_2024,
cowtan_ssip_automated_2024,
cross_improved_qLDPC_surgery_2024,
williamson_low_overhead_fault_2024,
stein_architectures_heterogeneous_2024,
swaroop_universal_adapters_quantum_2024,
cowtan_parallel_logical_2025,
he_extractors_qLDPC_2025,
poirson_engineering_csss_urgery_2025}.
While these various readings of \emph{qLDPC surgery}, indeed, bring down the required number of qubits, they unfortunately inherit several drawbacks from its predecessor for the surface code:
(i) in order to ensure fault tolerance, stabilizer generators must still be measured multiple times, and
(ii) to implement Clifford gates, some logical qubits are blocked.
On top of this, an additional quantum co-processor is required, in order to address the logical qubits in a qLDPC memory, which significantly increases the overall number of qubits.
Moreover, the size and layout of this co-processor highly depends on the choice of logical Pauli operators whose parities ought to be measured.
This, in turn, considerably limits the flexibility of qLDPC surgery and leads to further overheads for routing logical information.

In a different line of research closely related to qLDPC surgery, notions of \emph{code deformation}~\cite{bombin_clifford_gates_2011,vuillot_code_deformation_2019, brown_poking_holes_2017} were extended to certain qLDPC codes~\cite{krishna_fault_tolerant_2021}.
However, code deformation also relies on Pauli parity measurements and therefore faces similar challenges as qLDPC surgery.

Cheaper alternatives exist in specific settings, with the most efficient protocols being based on transversal implementations of logical gates~\footnote{Originally, 
the notion of transversality referred to implementations that only act on one qubit per code block at a time~\cite{aharonov_fault_tolerant_1997}.
In the recent literature, however, this notion has been relaxed to include implementations that are only transversal with respect to a specific partition of qubits. Nevertheless, the spread of errors is still contained, which ensures fault tolerance.
}.
Various methods have been proposed for compiling transversal Clifford gates \cite{webster_transversal_diagonal_2023, quintavalle_partitioning_qubits_2023, breuckmann_fold_transversal_2024,sayginel_fault_tolerant_logical_2025,malcolm_computing_efficiently_qldpc_2025} as well as non-Clifford gates \cite{lin_transversal_non_clifford_2024, hsin_classifying_logical_2024}. However, the existence of such gates is not guaranteed for all codes, and design trade-offs are often necessary to provide the necessary structure to support transversal implementations. 

This discussion highlights an apparent gap between state-of-the-art experiments and theoretical ideas:
while  
theory research seeks general and scalable protocols with provable properties,
experimentalists require concrete implementations for specific codes under real hardware constraints.
It can take multiple years of developing, fabricating, and calibrating a quantum device before one can execute an error correction experiment.
Here, early design choices may limit the possibility of migrating to newly-discovered QECCs with better code parameters or logical gates.
Ideally, one would have access to a method that is agnostic to the QECC and, given a target gate and hardware constraints, 
constructs an implementation of the logical gate with as little overhead as possible.
In other words, the task is to decompose a logical operation into a short sequence of physical operations.
In this work, we will refer to such a decomposition as a \emph{circuit implementation} of the logical gate.
However, this problem is notoriously difficult, particularly when it comes to fault-tolerant operations.

In this work, we develop a general framework for the synthesis of efficient circuit implementations of logical Clifford gates.
We substantially improve upon the work of Ref.~\cite{rengaswamy_logical_clifford_2020} by fully characterizing the gauge freedom of circuit implementations of logical Clifford gates rather than relying on direct enumeration of all gauges.
Additionally, we are 
able to incorporate physical constraints to construct hardware-tailored circuits~\cite{miller_hardware_tailored_2024}, 
a capability that is particularly important in light of the 
fact that most hardware platforms
are strongly constrained by demands of locality in one form or the other.
We also optimize the circuits with regard to two-qubit gate count or other suitable metrics.
This is achieved by translating the problem of logical circuit synthesis into an \emph{integer quadratically constrained program} (IQCP)
\cite{Optimization}.
To facilitate seamless usability and integration with existing software tools, we offer our framework as a Python package. It is available on \href{https://github.com/erkue/htlogicalgates}{GitHub} and can be installed from PyPI using the command \inlinecode{pip install htlogicalgates}.

For error-detecting codes---important testbeds for near-term experiments---we demonstrate how our circuits can achieve fault tolerance via an appropriate flag gadget construction.
As a timely application, we design fault-tolerant  Hadamard gates for the  ``{smallest interesting color code}''~\cite{campbell_the_smallest_interesting_2016},
which recently has received ample attention due to its suitability for experimental implementation~\cite{bluvstein_logical_quantum_2024, wang_fault_tolerant_2024, menendez_implementing_fault_2024}.
In contrast to a previous construction in Ref.~\cite{wang_fault_tolerant_2024}, 
our logical Hadamard gate does not rely on teleporting logical qubits into (and back from) a second QECC that admits \Swap{-}transversal Hadamard gates.
As a consequence, our teleportation-free Hadamard gates enjoy significant resource savings and improved performance.

It is a strength of our method that it is 
extremely flexible.
It not only applies to a generating set of Clifford operations but also to entire Clifford circuits.
By constructing a single implementation for a sequence of multiple logical gates, we achieve significant savings compared to the naive approach where each logical operation is individually compiled.
In this way, we actualize an idea that has been put forward in Ref.~\cite{chen_fault_tolerant_2025}.

The remainder of this work is structured as follows:
in \cref{sec:notation}, we introduce the notation used throughout this work.
In \cref{sec:new_clifford_framework},
we develop a new theoretical framework for the compilation of logical Clifford circuits.
\Cref{sec:fault_tolerance} outlines ideas how these  circuit implementations can be made fault-tolerant.
In \cref{sec:examples}, we apply our new algorithm and construct logical gates for various QECCs. 
Finally, we conclude with a summary in \cref{sec:conclusion}.

\section{Preliminaries and notation} \label{sec:notation}

Here, we review some well-known mathematical concepts  to prepare the necessary notation for the formulation of the circuit construction problem as an optimization program.
The experienced reader may directly jump to \cref{sec:new_clifford_framework}.

Let us start with the $n$-qubit Pauli group
\begin{align}
\Pn =\{\iu^q {X} ^\mathbf{r} {Z}^{\mathbf{r}'} \ \vert \ q \in \{0,1,2,3\},\ \, \mathbf{r},{\mathbf{r}'}\in \FF_2^n \},
\end{align}
where $\FF_2 $ is the binary field,
${X}^\mathbf{r}={X}^{r_1}\otimes \ldots\otimes {X}^{r_n}$
denotes an ${X}$-type $n$-qubit Pauli operator, and similarly for Pauli-$Z$. 
We define the \emph{binary representation 
of the Pauli group} as 
\begin{align}
    \Pn \longrightarrow \FF_2^{2n}, 
    \hspace{8mm} \iu^q{X}^\mathbf{r}{Z}^{\mathbf{r}'} \longmapsto  
    \begin{bmatrix}
        \mathbf{r} \\
        {\mathbf{r}'}
    \end{bmatrix}.\label{eq:pauli_representation_definition}
\end{align}

The $n$-qubit Clifford group, $\mathcal{C}^n$, is defined as the normalizer of the Pauli  group.
Modulo global phases and Pauli operators, the elements of $\Cn$ are in one-to-one correspondence with the binary symplectic group 
\begin{align}
    \Sp(\mathbbm F_2^{2n}) &= \left\{A\in
    \FF_2^{2n\times 2n}
    \ \big\vert \
    A^T \left[\begin{smallmatrix}
        0&\mathbbm 1\\
        \mathbbm 1&0
    \end{smallmatrix}\right] 
    A =
    \left[\begin{smallmatrix}
        0&\mathbbm 1\\
        \mathbbm 1&0
    \end{smallmatrix}\right] 
    \right\}.
\end{align}
For better readability, we break down the symplectic matrix $A\in\Sp(\FF_2^{2n})$ which represents a certain Clifford unitary $U\in\Cn$ into four blocks
\begin{align}
A=\begin{bmatrix}
        A^{xx}&A^{xz}\\
        A^{zx}&A^{zz}
    \end{bmatrix} \in \Sp(\FF_2^{2n})
\end{align}
with $A^{xx} ,  A^{xz}, A^{zx}, A^{zz} \in\FF_2^{n\times n}$.
The elements of the symplectic matrix $A$ are defined by requiring that
\begin{align}
    U{X}^\mathbf{r}{Z}^{\mathbf{r}'}U^\dagger \propto {X}^{A^{xx}\mathbf{r}+A^{xz}{\mathbf{r}'}}{Z}^{A^{zx}\mathbf{r}+A^{zz}{\mathbf{r}'}}\
    \label{eq:clifford_representation_definition}
\end{align}
holds for all Pauli operators represented by $\mathbf{r},{\mathbf{r}'}\in\mathbbm F_2^n$.
This defines the \emph{symplectic representation of the Clifford group},
\begin{align}
    \Cn \longrightarrow  \Sp(\FF_2^{2n}),\hspace{8mm}U \longmapsto A. \label{eq:homomorphism}
\end{align}
Throughout this work, we will use the suggestive notation $U=U_A$ whenever a Clifford operator $U$ is mapped to $A$.
Importantly, \cref{eq:homomorphism} is a {group homomorphism}, that is,
$U_A U_B = U_{AB}$ holds for all symplectic matrices $A,B \in \Sp (\FF_2^{2n})$.
Note that the representation $U_A\mapsto A$ is not faithful; its kernel consists of global phases together with the Pauli group~\cite{calderbank_quantum_error_1997, deheane_clifford_group_2003}.
However, we can safely ignore the Pauli gates not explicitly handled, as they can be easily reconstructed when needed by applying Theorem~2 in Ref.~\cite{deheane_clifford_group_2003}.

Next, we need to briefly review the stabilizer formalism~\cite{gottesman_stabilizer_codes_1997}.
An $\llbracket n,k,d\rrbracket$ stabilizer code  is a $k$-qubit subspace $\mathcal{L} \subset (\mathbb C^{2})^{\otimes n}$ that is defined as the common $+1$-eigenspace of   $n-k$ commuting, independent, and Hermitian $n$-qubit Pauli operators $S_1,\ldots, S_{n-k}$.
The latter are called the  stabilizer generators of the code and 
they generate its stabilizer group $\StabGroup=\langle S_1, \ldots, S_{n-k} \rangle$. 
Finally, the parameter $d$ refers to the distance of an $\llbracket n,k,d \rrbracket$ code and is defined as the smallest number of qubits that need to be altered to cause a logical error.
The logical Pauli group $\langle \overline{X}_i, \overline{Z}_i\mid 1\leq i\leq k\rangle$ is defined as the normalizer of the stabilizer group in the Pauli group, followed by modding out $\StabGroup$.
Note that the choice of $\overline{X}_i$ and $\overline{Z}_i$ defines the computational basis of the logical qubits~\cite{calderbank_quantum_error_1997}.

It is a well-known fact that an $n$-qubit unitary $U$ implements a logical operation on $\mathcal{L}$ if and only if (iff) $U$ commutes with all stabilizer generators~\cite{lidar_brun_quantum_error_2013}.
In this situation, the action of $U$ on the subspace $\mathcal{L}$ is fully determined by how it transforms the logical Pauli operators, i.e., by $U\overline{X}_i U^\dagger$ and $U\overline{Z}_i U^\dagger$ for all $i\in\{1,\ldots, k\}$.
The converse statement, however, is only true modulo stabilizer operators, see \cref{app:lem:different_logical_clifford_gates}
in \cref{app:sec:clifford}.

Any operation $U_E$, which maps $k$ qubits in a state vector $\ket{\psi}$ and $n-k$ auxiliary qubits in $\smash{\ket{0}^{\otimes(n-k)}}$ to the corresponding logical state vector $\ket{\overline{\psi}}\in\mathcal{L}$ is called an encoding operation for the considered stabilizer code.
It turns out that all stabilizer codes admit Clifford encoding operations~\cite{gottesman_stabilizer_codes_1997}, and in this paper, we will restrict ourselves to such operations.
This justifies the notation $U_E$ as we can use the symplectic representation $U_E\mapsto E$ to obtain the symplectic matrix $E\in\operatorname{Sp}\left(\FF_2^{2n}\right)$ of the encoding operation $U_E$.
Let us take a closer look at the encoding matrix 
\begin{align}
    E = \left[\begin{array}{c|c|c|c}
    \mathbf{x}_{1}\dots\mathbf{x}_{k}& \ast &\mathbf{z}_{1}\dots\mathbf{z}_{k}&\mathbf{s}_{1}\dots\mathbf{s}_{n-k}\\\hline
    \mathbf{x}'_{1}\dots\mathbf{x}'_{k}& \ast &\mathbf{z}'_{1}\dots\mathbf{z}'_{k}&\mathbf{s}'_{1}\dots\mathbf{s}'_{n-k}
    \end{array}\right],
    \label{eq:encoding_matrix_details}
\end{align}
where the logical Pauli operators $\overline{X}_i$ and $\overline{Z}_i$ are represented by their binary vectors $\mathbf{x}_i$, $\mathbf{x}'_i$ and $\mathbf{z}_i$, $\mathbf{z}'_i$, respectively, and the stabilizer generators $S_i$ are represented by $\mathbf{s}_i$ and $\mathbf{s}'_i$.
The other columns are less important for us and are abbreviated by an asterisk symbol ($\ast$).
For every stabilizer code and choice of logical Pauli operators, there exist many valid possibilities for selecting an encoding matrix $E$.
Later, 
in \cref{sec:target},
we will formalize this observation and fully parameterize all available gauges relevant to our purposes by introducing the new concept of a freedom matrix $F$.

\section{Formulating logical Clifford compilation as a binary optimization problem} \label{sec:new_clifford_framework}

In this section, we show that the problem of decomposing a logical Clifford operation into physical gates can be formulated as an \emph{integer quadratically constrained program} (IQCP).
By adapting and developing further ideas from Ref.~\cite{miller_hardware_tailored_2024}, we can thereby enforce the resulting  circuits to respect arbitrary hardware connectivity constraints, see \Cref{lem:ansatz}.
On a high level, we 
introduce an ansatz circuit (parameterized with yet-to-be-determined binary variables)
and impose that it realizes one of the many possible implementations (due to gauge freedom) of the desired logical gate.
The ansatz class is presented in \cref{sec:ansatz}, while the gauge freedom is fully parameterized in  \Cref{thm:gauge_freedom} of \cref{sec:target}.
In \cref{sec:optimization_problem}, we identify the relevant equations and formulate an IQCP to solve them.

\subsection{Characterization of ansatz circuits} \label{sec:ansatz}

We now introduce notation for our class of ansatz circuits, designed to facilitate the construction of hardware-tailored implementations.
A \emph{single-qubit Clifford gate layer} (SCL) $U_B$ consists of Clifford gates acting independently on every qubit.
The symplectic matrix $B\in\Sp(\FF_2^{2n})$ that represents such a  fully-transversal $n$-qubit Clifford gate consists of diagonal block matrices $B^{xx}, B^{xz}, B^{zx}, B^{zz}\in\FF_2^{n\times n}$.
Here, the {$i$-th} diagonal entry is given by the symplectic representation of the single-qubit Clifford gate on qubit $i$, e.g., $B^{xx}=\operatorname{diag}(b_1^{xx},\dots,b_n^{xx})$.
A \emph{controlled-$Z$ gate layer} (CZL) $U_G$ consists of $\CZ$ gates acting between pairs of qubits.
The symplectic representation of such a $n$-qubit CZL can be characterized by an adjacency matrix $\Gamma\in\mathbbm F_2^{n\times n}$, where the entry $\Gamma_{i,j}$ equals one iff there is a $\CZ$ gate between qubits $i$ and $j$.
The symplectic representation of $U_G$ is given by
\begin{align}
    U_G\longmapsto G=\begin{bmatrix}
        \mathbbm 1 & 0\\
        \Gamma & \mathbbm 1
    \end{bmatrix}. \label{eq:CZL}
\end{align}
Similarly, the qubit connectivity of quantum hardware can be described by means of an adjacency matrix $\Gamma_{\text{con}}$.
This time, $\Gamma_{\text{con},i,j}$ equals one iff qubits $i$ and $j$ are physically connected.
Later, this will allow us to obtain hardware-tailored circuits by imposing $\Gamma \leq \Gamma_\text{con}$, with the inequality understood element-wise~\cite{miller_graphstatevis_interactive_2021}.

With these two types of gate layers, we are now ready to define a class of ansatz circuits.
These circuits are built from multiple SCLs and CZLs, denoted by $B_i$ and $G_i$, respectively, and are arranged in an alternating sequence.
This yields the ansatz circuit
\begin{align}  \label{eq:ansatz_class_unitary}
    U_{A_l}=U_{B_{l+1}}U_{G_l}U_{B_l}\cdots U_{G_1}U_{B_1}
\end{align}
where the length $l$ of the ansatz corresponds to the total number of CZLs.
By the group homomorphism property of \cref{eq:homomorphism}, the symplectic representative of $U_{A_l}$ is simply given by
\begin{align}  \label{eq:ansatz_class_binary}
    {A_l}={B_{l+1}}{G_l}{B_l}\cdots {G_1}{B_1}\in\Sp(\FF_2^{2n}).
\end{align}

This concept is illustrated in \cref{fig:tt12:cx} for a use case example that will be discussed in detail in \cref{sec:examples}.
Here, we proceed by stating a straightforward observation.
\begin{lemma}[Expressivity of our ansatz class] \label{lem:ansatz}
    Consider a quantum device whose connectivity graph $\Gamma_\text{con}$ has just one connected component.
    Then, every $n$-qubit Clifford gate $U\in \Cn$ can be expressed as a hardware-tailored circuit, that is, there exist SCLs $\smash{B_1,\ldots, B_{l+1} \in \Sp(\FF_2^{2n})}$ and CZLs $\smash{G_1,\ldots,G_l \in \Sp(\FF_2^{2n})}$ with $\smash{\Gamma_1,\ldots, \Gamma_l \le \Gamma_\text{con}}$ such that $U= U_{B_{l+1}}U_{G_l} U_{B_l} \ldots U_{G_1}U_{B_1}$.
\end{lemma}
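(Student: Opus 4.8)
The plan is to argue in two stages. First, I would show that the alternating ansatz of \cref{eq:ansatz_class_binary} is already expressive enough to realize every Clifford unitary when CZLs are permitted between arbitrary pairs of qubits. Second, I would use the connectedness of $\Gamma_\text{con}$ to rewrite every long-range $\CZ$ in terms of nearest-neighbor operations, so that the hardware constraint $\Gamma_i\le\Gamma_\text{con}$ can be imposed without losing any Clifford.

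For the first stage, I would invoke the well-known fact that $\{H, S, \CZ\}$ generates the $n$-qubit Clifford group modulo global phases. The gates $H$ and $S$ are single-qubit Cliffords and hence SCLs, whereas a single $\CZ$ between a pair $(i,j)$ is a CZL whose adjacency matrix has the symmetric entry $\Gamma_{i,j}=\Gamma_{j,i}=1$ and is zero otherwise. Given any circuit over these generators, I would collapse maximal runs of single-qubit gates into one layer each: a product of SCLs is again an SCL, since each of its four blocks is a sum of products of diagonal matrices and therefore diagonal; similarly, consecutive $\CZ$ layers merge into a single CZL whose lower-left block is the sum of the individual adjacency matrices, see \cref{eq:CZL}. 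Inserting trivial layers wherever strict alternation would otherwise break---the identity is simultaneously an SCL and a CZL with $\Gamma=0$---yields the normal form of \cref{eq:ansatz_class_binary}. By the homomorphism property of \cref{eq:homomorphism} this product already reproduces the target symplectic matrix $A$; the residual Pauli and global-phase ambiguity of the non-faithful representation is then eliminated by absorbing the appropriate tensor product of single-qubit Paulis into the terminal layer $B_{l+1}$, exactly as anticipated in the discussion following \cref{eq:homomorphism}.

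For the second stage, I would drop the assumption that $\CZ$ gates are available between arbitrary pairs. Let a target circuit contain a $\CZ$ between non-adjacent qubits $i$ and $j$. Since $\Gamma_\text{con}$ is connected, there is a path $i=v_0,v_1,\dots,v_m=j$ whose consecutive vertices are physically connected. I would route along this path using nearest-neighbor $\Swap$ gates, apply the now nearest-neighbor $\CZ$, and then reverse the routing. Each $\Swap$ factors into three nearest-neighbor $\Cnot$ gates, and each $\Cnot_{a\to b}=H_b\,\CZ_{ab}\,H_b$ involves only a $\CZ$ on the physical edge $(a,b)$ together with single-qubit Cliffords. Hence every $\CZ$ surviving this substitution respects $\Gamma_i\le\Gamma_\text{con}$, and re-running the merging-and-padding step of the first stage restores the alternating normal form.

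The one indispensable hypothesis is connectedness, and it is used precisely in the routing step: were $\Gamma_\text{con}$ to split into several components, no admissible sequence of CZLs could entangle qubits in distinct components, and the claim would be false. The bookkeeping required to keep the circuit in strict alternating SCL/CZL form is, by contrast, not a genuine obstacle, because trivial layers can always be inserted and adjacent homogeneous layers always merge; the only subtlety is the Pauli correction, which is dealt with once through the freedom in the terminal SCL. I would make no attempt to bound the length $l$, as the lemma asserts existence only---controlling $l$ is deferred to the IQCP formulation of \cref{sec:optimization_problem}.
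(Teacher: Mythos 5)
Your proposal is correct and follows essentially the same route as the paper's proof: generate the Clifford group from single-qubit Cliffords and $\CZ$ gates, then use connectedness of $\Gamma_\text{con}$ to route non-adjacent pairs via nearest-neighbor $\Swap$ gates, each of which decomposes into three hardware-admissible $\CZ$ gates conjugated by Hadamards (your $\Cnot$-based factorization composes to exactly the sequence $({H}\otimes {I})\,\CZ\,({H}\otimes {H})\,\CZ\,({H}\otimes {H})\,\CZ\,({H}\otimes {I})$ used in the paper). The additional bookkeeping you supply---merging adjacent homogeneous layers, padding with trivial layers to restore strict alternation, and absorbing the Pauli frame into the terminal SCL---is left implicit in the paper but is correct and harmless.
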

\begin{proof}
The Clifford group is generated by the set of single-qubit Clifford gates together with all $\CZ$ gates between arbitrary qubit pairs.
However, we can not directly implement $\CZ$ gates between arbitrary qubit pairs since the quantum device may not be fully connected.
To circumvent this, we use $\Swap{}$ gates to move unconnected qubit pairs next to each other and back again, which is possible because we assume that $\Gamma_\text{con}$ has only a single connected component.
The $\Swap{}$ gate between two adjacent qubits can be realized as the gate sequence $({H}\otimes {I}) \CZ ({H}\otimes {H}) \CZ ({H}\otimes {H}) \CZ ({H}\otimes {I})$, where ${H}=({X}+{Z})/\sqrt{2}$ denotes the Hadamard gate.
Therefore, the required sequences of $\Swap$ gates can be expressed within our ansatz class, which finishes the proof.
\end{proof}

Although straightforward to prove, 
\cref{lem:ansatz} offers a simple guarantee that our ansatz class of hardware-tailored circuits $U_{A_l}$ from \cref{eq:ansatz_class_unitary} is expressive enough to implement all Clifford operations.
This motivates their use as templates in the search for logical Clifford gate implementations. 
While the proof of \cref{lem:ansatz} does not aim to minimize the circuit length $l$, we will see in \cref{sec:examples} that small values of $l$ can often be achieved in practice.

\subsection{Characterization of target circuits} \label{sec:target}

Here, we scrutinize the operations that we aim to match to our class of ansatz circuits: logical Clifford gates.
Consider an  $\llbracket n,k,d\rrbracket$ stabilizer code with encoding operation $U_E$ as well as a $k$-qubit Clifford gate $U_C$ that we want to implement on the logical level.
A trivial (but never fault-tolerant) implementation is given by  $\smash{U_\text{triv} = U_E(U_C\otimes \mathbbm 1_{n-k})U_E^\dagger}$, i.e., by decoding the quantum information, applying $U_C$ on the unprotected qubits, and re-encoding.
By \cref{eq:homomorphism},  the operator $U_\text{triv}$ is represented by $E C' E^{-1}$, where
\begin{align}
U_C\otimes \mathbbm 1_{n-k}\longmapsto C'
\end{align}
defines $C'\in \Sp(\FF_2^{2n})$.
Note that there might be other, more efficient circuit implementation of the logical gate that are also represented by $E C' E^{-1}$.
However, all of them have a fully-determined action not only on the code space $\mathcal{L}$ but on the entire physical $n$-qubit Hilbert space.
While the action on $\mathcal{L}$ is determined by the choice of the target gate $U_C$, the action on the ambient space is fixed by the choice of a particular \emph{gauge}.
Exploiting this gauge freedom is what will allow us to probe a vast amount of different implementations for $U_C$ on the logical level.
Indeed, 
given a second encoding matrix $\smash{  E_2\in\Sp(\FF_2^{2n})}$ for the considered code, we can implement the same logical gate via $\smash{E C'  E^{-1}_2}$.
This has the same effect as $E C'E^{-1} $ on the logical level,
but may act differently on the physical degrees of freedom.
The full characterization of this gauge freedom is our first main result:

\begin{restatable}[Characterization of target circuits via gauges]{theorem}{Fcomplete}\label{thm:gauge_freedom}
    Consider an $\llbracket n,k,d\rrbracket$ code with  a Clifford encoding circuit $U_E\in\Cn$ as well as a $k$-qubit Clifford gate $U_C\in\mathcal{C}^k$.
    Write $C'\in\Sp(\FF_2^{2n})$ for the matrix that represents $U_C\otimes \mathbbm1\in\Cn$.
    Then, every $n$-qubit Clifford gate that implements $U_C$ on the logical level is represented by $E C' F E^{-1}\in\Sp(\FF_2^{2n})$ for precicely one  symplectic matrix of the form
    \begin{equation}
        \begin{tikzpicture}[baseline=(current  bounding  box.center)]
        \draw (0, 0) node {
        $F
        =\begin{bmatrix}
            \mathbbm{1} & \ast & \cdots & \ast & 0 & 0 & \cdots & 0 \\
            0 & \ast & \cdots & \ast & 0 & 0 & \cdots & 0 \\
            \vdots & \vdots &  \ddots& \vdots & \vdots & \vdots &  \ddots& \vdots\\
            0 & \ast & \cdots & \ast & 0 & 0 & \cdots & 0\\
            
            0 & \ast & \cdots & \ast & \mathbbm{1}
            
            & 0 & \cdots & 0\\
            \ast & \ast & \cdots & \ast & \ast & \ast & \cdots & \ast \\
            \vdots & \vdots & \ddots & \vdots & \vdots & \vdots &  \ddots & \vdots \\
            \ast & \ast & \cdots & \ast & \ast & \ast & \cdots & \ast \\
        \end{bmatrix}$
        };
        \draw [decorate,decoration={brace,amplitude=5pt,mirror}] (-0.94+1.82,-2.3) -- (0.29+1.82,-2.3) node [black,midway,yshift=-10px] {\footnotesize $n-k$};
        \draw [decorate,decoration={brace,amplitude=5pt,mirror}] (-1.37+1.82,-2.3) -- (-1.12+1.82,-2.3) node [black,midway,yshift=-10px] {\footnotesize $k$};
        \draw [decorate,decoration={brace,amplitude=5pt,mirror}] (-0.94,-2.3) -- (0.29,-2.3) node [black,midway,yshift=-10px] {\footnotesize $n-k$};
        \draw [decorate,decoration={brace,amplitude=5pt,mirror}] (-1.37,-2.3) -- (-1.12,-2.3) node [black,midway,yshift=-10px] {\footnotesize $k$};
        \draw [decorate,decoration={brace,amplitude=5pt,mirror}] (2.4,-2.1) -- (2.4,-0.05) node [black,midway, xshift=10px] {\footnotesize $n$};
        \draw [decorate,decoration={brace,amplitude=5pt,mirror}] (2.4,0.05) -- (2.4,2.1) node [black,midway, xshift=10px] {\footnotesize  $n$};
        \end{tikzpicture} ,
        \label{eq:freedom_matrix}
    \end{equation}
    where asterisk ($\ast$) symbols indicate binary variables that are only constrained by the requirement $F\in\Sp(\FF_2^{2n})$.
    Moreover, the set $\mathcal{F}=\{F \in \Sp(\FF_2^{2n}) \ \vert \ F \text{ obeys \cref{eq:freedom_matrix}}\}$ is a group, and there are exactly 
    \begin{align}
    \vert \mathcal{F}\vert = \frac{2^{n(n+1)/2 + k(n-k)}}{2^{k(k+1)/2}}\prod_{m=1}^{n-k}(2^{n-k}-2^{m-1})\label{eq:freedom_number}
    \end{align}
    valid choices for $F\in \mathcal{F}$.
\end{restatable}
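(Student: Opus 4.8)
\emph{Reformulation.} The plan is to move everything into the symplectic representation and reduce the statement to a counting problem on structured matrices. Since $E,C'\in\Sp(\FF_2^{2n})$ and $\Sp(\FF_2^{2n})$ is a group, the assignment $F\mapsto A:=EC'FE^{-1}$ is a bijection of $\Sp(\FF_2^{2n})$ onto itself; in particular $A$ is symplectic iff $F$ is. It therefore suffices to characterise those $F$ for which $U_A$ implements $U_C$ on the logical level, to show these are exactly the matrices of the form \cref{eq:freedom_matrix}, and to count them. I will use the standard basis $e_1,\dots,e_{2n}$ ($X$-part first, $Z$-part second) and the coordinate subspaces $L=\langle e_1,\dots,e_k,e_{n+1},\dots,e_{n+k}\rangle$ (logical), $D=\langle e_{k+1},\dots,e_n\rangle$ (destabiliser), and $V=\langle e_{n+k+1},\dots,e_{2n}\rangle$ (stabiliser), recalling from \cref{eq:encoding_matrix_details} that $Ee_i$, $Ee_{n+i}$, $Ee_{n+k+j}$ are the binary vectors of $\overline{X}_i$, $\overline{Z}_i$, and $S_j$.

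\emph{Translating the conditions.} A Clifford $U_A$ implements $U_C$ on $\mathcal L$ exactly when (i) it maps the code space to itself, i.e.\ $A$ preserves the stabiliser span $E(V)$, and (ii) it reproduces the logical action modulo stabilisers, i.e.\ $A(Ee_i)\equiv EC'e_i$ and $A(Ee_{n+i})\equiv EC'e_{n+i}\pmod{E(V)}$ for $1\le i\le k$. The decisive structural input is that $C'$ represents $U_C\otimes\mathbbm 1_{n-k}$, hence acts as the symplectic image of $U_C$ on $L$ and as the identity on $D$ and on $V$; in particular $(C')^{-1}$ fixes $V$ pointwise. Substituting $E^{-1}AE=C'F$ and cancelling $C'$, conditions (i)--(ii) collapse to
\begin{align*}
F(V)=V,\qquad Fe_i-e_i\in V,\qquad Fe_{n+i}-e_{n+i}\in V
\end{align*}
for $1\le i\le k$. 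These are precisely the fixed $\mathbbm 1$- and $0$-entries of \cref{eq:freedom_matrix}: the stabiliser columns stay inside $V$, the logical columns are pinned to $e_i$ resp.\ $e_{n+i}$ up to a $V$-contribution, and the destabiliser columns are unconstrained (beyond $F\in\Sp$). Reversing the computation shows every such $F$ yields an admissible $A$; combined with injectivity of $F\mapsto EC'FE^{-1}$ this gives the ``precisely one'' claim. That $\mathcal F$ is a group is immediate, as ``preserving $V$'' and ``fixing $e_i,e_{n+i}$ modulo $V$'' are closed under composition and inversion.

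\emph{Counting.} For $|\mathcal F|$ I will use the symplectic-orthogonal splitting $\FF_2^{2n}=L\perp(V\oplus D)$, in which $L$ is nondegenerate and $V,D$ are complementary Lagrangians of the auxiliary block with $V$ isotropic. An $F\in\mathcal F$ is then assembled from three independent pieces. First, the action $M:=F|_V\in\GL(V)\cong\GL(n-k,\FF_2)$, contributing $\prod_{m=1}^{n-k}(2^{n-k}-2^{m-1})$ choices. Second, the shifts $v\in V$ attached to the $2k$ logical basis vectors via $Fe=e+v$; these are entirely free, since $V\perp L$ and the isotropy of $V$ force all pairings within $L$ and between $L$ and $V$ to be preserved automatically, giving $2^{2k(n-k)}$ choices. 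Third, the destabiliser images $f_j:=Fe_{k+j}$, decomposed as $f_j=a_j+b_j+c_j$ along $L\oplus D\oplus V$: the conditions $\langle f_j,Fe_{n+k+i}\rangle=\delta_{ij}$ fix $b_j\in D$ (the basis dual to $M(V)$), and $\langle f_j,Fe_i\rangle=0$ fix $a_j\in L$ by nondegeneracy on $L$, so only the $c_j\in V$ remain, constrained solely by $\langle f_i,f_j\rangle=0$. This reads $\langle b_i,c_j\rangle+\langle b_j,c_i\rangle=\langle a_i,a_j\rangle$, a prescribed symmetric off-diagonal, whence the $c_j$ sweep out an affine space of dimension $(n-k)(n-k+1)/2$. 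Multiplying yields $|\mathcal F|=|\GL(n-k,\FF_2)|\cdot 2^{2k(n-k)}\cdot 2^{(n-k)(n-k+1)/2}$, and a short exponent computation (using $n(n+1)-k(k+1)=(n-k)(n+k+1)$) reproduces \cref{eq:freedom_number}.

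\emph{Main obstacle.} The delicate part is the counting: carefully separating which components of the destabiliser images $f_j$ are forced by $M$ and the shifts from those that are genuinely free, and checking that the residual isotropy constraints on the $V$-components collapse to exactly a symmetric-matrix count of dimension $(n-k)(n-k+1)/2$. The translation step also demands care that the signs and Pauli factors suppressed by the symplectic picture do not obstruct the ``modulo stabiliser'' equivalences.
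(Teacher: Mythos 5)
Your proposal is correct, and the characterization step is essentially the paper's argument in disguise, but your counting takes a genuinely different route. On the first part: your substitution $F=C'^{-1}E^{-1}AE$ and cancellation of $C'$ (using that $C'$ fixes the destabilizer and stabilizer coordinates pointwise) is a coordinate-free version of the paper's column-by-column computation in \cref{app:sec:proof_gauge_freedom}; both arguments rest on the same background facts—that a Clifford is logical iff it permutes the stabilizer group, and that two logical Cliffords agree on $\mathcal{L}$ iff they transform logical Paulis identically modulo stabilizers—which the paper proves in \cref{app:lem:logical_clifford_gates,app:lem:different_logical_clifford_gates} (the latter via Schur's lemma) and which you assert as the definition of "implements $U_C$ on the logical level"; you should cite or reprove these, since the equivalence is not purely formal. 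Your group-closure argument is cleaner than the paper's explicit block multiplication. The counting is where you diverge: the paper works blockwise on \cref{eq:freedom_matrix}, first counting admissible $F^{zz}$ as $2^{k(n-k)}\,\vert\GL(\FF_2^{n-k})\vert$, then noting $F^{xx}=((F^{zz})^T)^{-1}$ is determined, and finally counting $F^{zx}$ via the symmetry of $(F^{xx})^TF^{zx}$, yielding $2^{n(n+1)/2-k(k+1)/2}$ residual choices. You instead count geometrically in the splitting $L\perp(D\oplus V)$: a free $M\in\GL(V)$, free shifts in $V$ on the $2k$ logical basis vectors, and destabilizer images whose $D$- and $L$-components are forced by duality and nondegeneracy, leaving an affine space of symmetric type of dimension $(n-k)(n-k+1)/2$. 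This is correct (I verified the exponent identity matches \cref{eq:freedom_number}), and it buys something the paper only supplies post hoc: your three factors are exactly the interpretable factorization the paper states after its proof in \cref{app:eq:freedom_number_sorted}—stabilizer-generator relabelings, logical-Pauli freedom modulo stabilizers, and error-transformation freedom—so your route derives the interpretation rather than appending it. Two small points of care: your statement that $\langle f_j,Fe_i\rangle=0$ "fixes $a_j$" silently absorbs the cross term $\langle b_j,v_i\rangle$, so $a_j$ is uniquely determined but generically nonzero, and one should note explicitly that enforcing all pairwise pairings of basis images suffices for $F\in\Sp(\FF_2^{2n})$, including invertibility; both are easily repaired and do not affect the count.
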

\begin{proof}
    See \cref{app:sec:proof_gauge_freedom}.
\end{proof}

The matrix $F$ introduced in \Cref{thm:gauge_freedom} parameterizes the available gauge freedom of implementing a desired Clifford gate on the logical level whose action outside the code space is irrelevant.
Therefore, we refer to $F$ as the \emph{freedom matrix}.
By \cref{eq:freedom_number}, there exist exponentially many valid assignments for the $4n^2$ entries of $F$.
As such, \Cref{thm:gauge_freedom} establishes a powerful handle for probing vast amounts of potential circuit implementations at the same time.
From a conceptual standpoint, it is worth noting that the \emph{freedom gauge group} $\mathcal{F}$, together with the (Pauli) stabilizer group $\mathcal{S}$, parameterizes the group $\mathcal{G}= 
\{U \in \Cn \ \vert \ \forall \ket{\psi}\in \mathcal{L}: U\ket{{\psi}}= \ket{\psi} \} $ of Clifford stabilizer symmetries.
More precisely, it holds
\begin{align}
\mathcal{G} = \{ f_P(F)e^{\iu\varphi(F)}  U_E U_FU_E^\dagger S \ \vert \ F\in \mathcal{F}, S \in \mathcal{S}\}
\end{align}
for an appropriate choice of $f_P: \mathcal{F} \rightarrow  \Pn$ and $\varphi: \mathcal{F} \rightarrow \RR$
to ensure the correct Pauli frame and global phase, respectively.

\subsection{The binary optimization program} \label{sec:optimization_problem}

Now that we have identified $EC'FE^{-1}$ as a parameterization of all symplectic matrices representing a given logical Clifford gate,
we are in the position to devise methods for constructing concrete circuit implementations.
Clearly, every ansatz circuit $U_{A_l}$ from \cref{eq:ansatz_class_unitary} solves this problem if
\begin{equation}
    A_l=EC'FE^{-1} \label{eq:main_basic}
\end{equation}
for a valid assignment of the freedom matrix $F$ from \cref{eq:freedom_matrix}.
In practice, solving \cref{eq:main_basic} for a suitable ansatz length $l$ amounts to finding assignments for the binary variables in the block-diagonal matrices $B_1,\dots,B_{l+1}$ and the adjacency matrices $\Gamma_1,\dots,\Gamma_{l}\leq\Gamma_\text{con}$.
The specific gauge fixed by $F$ is important insofar as it enables compatibility with many different physical circuits simultaneously.
It is not necessary to explicitly enforce $F\in\Sp(\FF_2^{2n})$ as this is always fulfilled if $A_l\in\Sp(\FF_2^{2n})$.
The latter is readily taken care of by adding $\smash{b_i^{(j)xx}b_i^{(j)zz}+b_i^{(j)xz}b_i^{(j)zx}=1}$ for every qubit $j$ and all SCLs $B_i$ as well as $\Gamma_i=\Gamma_i^T$ for all CZLs $G_i$ to the binary system of polynomial equations~\cite{miller_hardware_tailored_2024}.

Having identified \cref{eq:main_basic} as the mathematics behind the circuit construction problem is one of the core results of this paper.
Now, we could formulate a binary optimization program for solving it.
Before we do so, however, let us first transform the system of equations in order to ease the problem for numerical solvers.
We begin by treating the product $C'F$ as a matrix in its own right, before we  remove columns and rows indexed from $k+1$ to $n$. 
This yields the matrix
\begin{align}  
    F'_C = \begin{bmatrix}
        C & 0 \\
        * & *
    \end{bmatrix} \in \mathbbm F_2^{(n+k)\times(n+k)} \label{eq:freedom_matrix_reduced}
\end{align}
with $(n+k)(n-k)$ parameterized entries in the lower block.
Next, we trim columns $k+1$ to $n$ from $E$ in \cref{eq:encoding_matrix_details}, which results in the matrix $E
    \in \FF_2^{2n \times (n+k)}$ with
\begin{align}
    E' = \left[\begin{array}{c|c|c}
    \mathbf{x}_{1}\dots\mathbf{x}_{k} &\mathbf{z}_{1}\dots\mathbf{z}_{k}&\mathbf{s}_{1}\dots\mathbf{s}_{n-k}\\\hline
    \mathbf{x}'_{1}\dots\mathbf{x}'_{k}&\mathbf{z}'_{1}\dots\mathbf{z}'_{k}&\mathbf{s}'_{1}\dots\mathbf{s}'_{n-k}
    \end{array}\right] 
    .
    \label{eq:encoding_matrix_truncated}
\end{align}
With this notation in place, we are ready to state our next main result:

\begin{corollary}[Simplified system of equations] 
\label{crl:optimization}
    For a given encoding circuit $E$, desired logical gate $C$, and ansatz length~$l$, the polynomial system of equations over $\FF_2$ in \cref{eq:main_basic}
    defines the same variety of solutions for $A_l\in\Sp(\FF_2^{2n})$ as
    \begin{align}
    A_l E'=E'F'_C. \label{eq:main_improved}
    \end{align}
\end{corollary}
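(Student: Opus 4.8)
The plan is to prove that the two systems cut out the same set of symplectic $A_l$ by passing through the equivalent form $A_l = EC'FE^{-1} \iff A_lE = EC'F$ and showing that right-multiplication by the column-selection map producing $E'$ is lossless. Concretely, let $P\in\FF_2^{2n\times(n+k)}$ be the $0/1$ matrix that selects the column indices $\mathcal{I}=\{1,\dots,k\}\cup\{n+1,\dots,2n\}$, so that $E'=EP$ and, for any $2n\times 2n$ matrix $M$, the product $MP$ is the submatrix obtained by deleting columns $k+1,\dots,n$. Since $E,C'\in\Sp(\FF_2^{2n})$ are invertible, \cref{eq:main_basic} is equivalent to $A_lE=EC'F$, and the goal is to show that applying $(\cdot)P$ to this identity is reversible at the level of $A_l$.

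For the forward direction I would fix a freedom-form $F$ as in \cref{eq:freedom_matrix} and compute $C'F$ blockwise in $4\times4$ blocks of sizes $k,n-k,k,n-k$. The matrix $C'$ representing $U_C\otimes\mathbbm1_{n-k}$ carries the logical symplectic $C$ in its logical slots and $\mathbbm1_{n-k}$ in the two auxiliary slots. A short block multiplication then shows two facts: the middle row block (indices $k+1,\dots,n$) of $C'F$ coincides with that of $F$—because $C'$ is the identity on the $n-k$ auxiliary qubits—and its entries in the columns $\mathcal{I}$ vanish by the normal form of \cref{eq:freedom_matrix}; moreover the $\mathcal{I}\times\mathcal{I}$ submatrix of $C'F$ equals exactly $F'_C=\left[\begin{smallmatrix}C&0\\ \ast&\ast\end{smallmatrix}\right]$ from \cref{eq:freedom_matrix_reduced}. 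Together these give $C'FP = PF'_C$, so right-multiplying $A_lE=EC'F$ by $P$ yields $A_lE'=E'F'_C$ with a valid reduced-form $F'_C$; hence every solution of the full system solves \cref{eq:main_improved}.

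For the converse I would start from a symplectic $A_l$ and a reduced-form $F'_C$ with $A_lE'=E'F'_C$, and define $F:=C'^{-1}E^{-1}A_lE$, which is automatically symplectic and satisfies $A_lE=EC'F$ by construction. Substituting this into $A_lE'=E'F'_C$ and cancelling the invertible $E$ gives $(C'F)P = PF'_C$. Reading off this identity reconstructs the structure of $F$: the vanishing rows $k+1,\dots,n$ of $PF'_C$ force the corresponding middle-row blocks of $F$, restricted to the columns $\mathcal{I}$, to be zero, while the top block $\left[\begin{smallmatrix}C&0\end{smallmatrix}\right]$ of $F'_C$, combined with invertibility of $C$, pins the logical column blocks of $F$ to $\mathbbm1_k$ and $0$ in precisely the pattern of \cref{eq:freedom_matrix}. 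Thus $F\in\mathcal{F}$, so $A_l=EC'FE^{-1}$ solves the full system, and the two solution varieties coincide.

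I expect the backward direction to be the crux: one must argue that collapsing the $n-k$ deleted columns discards no constraint on $A_l$. This rests on two structural features—that $C'$ acts as the identity on the $n-k$ auxiliary qubits, so the middle rows of $C'F$ merely copy those of $F$, and that $C$ is invertible, so the logical blocks of $F$ are uniquely determined—while the deleted columns $k+1,\dots,n$ are exactly the free column block of the freedom matrix and therefore encode no condition on $A_l$. Verifying that the reconstructed $F$ lands in the freedom form $\mathcal{F}$, rather than merely being symplectic, is the one step that genuinely uses the normal form of \cref{thm:gauge_freedom} and is where I would spend the most care.
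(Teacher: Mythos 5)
Your proposal is correct, but it follows a genuinely different route from the paper's. The paper's forward direction is the same one-line observation as yours, but its converse switches to the operator level: from $A_lE'=E'F'_C$ it reads off that $U_{A_l}$ maps stabilizer generators into $\mathcal{S}$ and transforms logical Pauli operators as $U_C$ modulo stabilizers, invokes \cref{app:lem:logical_clifford_gates,app:lem:different_logical_clifford_gates} from \cref{app:sec:clifford} to conclude that $U_{A_l}$ implements $U_C$ on the logical level, and then cites \Cref{thm:gauge_freedom} to recover the form $EC'FE^{-1}$. You instead stay entirely within matrix algebra: with the column-selection matrix $P$ you establish the key identity $(C'F)P=PF'_C$ (which holds because $C'$ acts as the identity on the auxiliary rows, so the middle rows of $C'F$ copy those of $F$ and vanish on the columns $\mathcal{I}$ by \cref{eq:freedom_matrix}), and in the converse you reconstruct $F:=C'^{-1}E^{-1}A_lE$ and show from $(C'F)P=PF'_C$ that the $\mathcal{I}$-columns of $F$ carry exactly the pattern of \cref{eq:freedom_matrix} — the logical $2k\times 2k$ block is pinned to $\mathbbm{1}_{2k}$ and the stabilizer columns to the required zeros via invertibility of $C$, the middle rows vanish, and the deleted columns $k+1,\dots,n$ are precisely the free block of $F$, so $F\in\mathcal{F}$ follows since symplecticity is automatic. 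I checked the block computations and they are sound. Your route buys self-containedness (no appeal to the Schur's-lemma-based equivalence in \cref{app:lem:different_logical_clifford_gates}, no re-invocation of \Cref{thm:gauge_freedom}) and makes explicit both why discarding the columns $k+1,\dots,n$ loses no constraint on $A_l$ and that the gauge is uniquely exhibited as $F=C'^{-1}E^{-1}A_lE$; the paper's route buys brevity and conceptual transparency, reading \cref{eq:main_improved} directly as ``stabilizers map to stabilizers, logicals transform as $C$ modulo stabilizers'' and reusing already-established machinery — indeed, the converse step in \cref{app:sec:proof_gauge_freedom} performs essentially your reconstruction, so your argument in effect inlines and specializes it at the level of the reduced matrices.
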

\begin{proof}
Clearly, every solution $A_l$ of \cref{eq:main_basic}   solves \cref{eq:main_improved} via \cref{eq:freedom_matrix_reduced}.
Conversely, let $A_l$ be a solution of \cref{eq:main_improved}.
Then it is readily verified that $U_{A_l}$ permutes the stabilizer group and that it transforms logical Pauli operators in the same way as $U_C$.
Therefore, \cref{app:lem:logical_clifford_gates,app:lem:different_logical_clifford_gates}
from \cref{app:sec:clifford} yield that $U_{A_l}$  implements a  $U_C$ gate on the logical level.
Hence, \Cref{thm:gauge_freedom} applies, which finishes the proof.
\end{proof}

With \Cref{crl:optimization} at hand, 
we can  formulate the problem of finding logical Clifford gate implementations 
in terms of an \emph{integer quadratically constrained program} (IQCP).
After specifying a suitable (linear or quadratic) cost function for the ansatz $A_l$ from \cref{eq:ansatz_class_binary},
this IQCP can be written as
\begin{align}
\begin{split}
    \text{min }\,\,& \operatorname{cost}(A_l) \\
    \text{subject to }\,\,& A_lE'=E'F_C',\\
    &B_1,\dots,B_{l+1} \text{ are SCLs,}\\
    \text{and }&G_1,\dots,G_l \text{ are CZLs}.
\end{split} \label{eq:binary_optimization_problem}
\end{align}
Throughout this paper, we define $\operatorname{cost}(A_l)$ as the total number of $\CZ$ gates in $U_{A_l}$,
however, alternative cost functions are also conceivable, e.g., penalizing low-fidelity $\CZ$ or Hadamard gates.
The (binary) variables, which we optimizes over in the IQCP, 
are given by the parameterized entries of $B_i$, $G_i$, and $F_C'$.
Behind the scenes of \cref{eq:binary_optimization_problem}, binary slack variables are introduced to reduce the degree of the polynomials in \cref{eq:main_improved} to quadratic.
Similarly, integer slack variables must be introduced to account for the fact that \cref{eq:main_improved} must be satisfied modulo 2~\cite{Optimization}.
Although solving IQCP is NP-hard in the worst case, there exist sophisticated methods that can tackle it effectively in practice~\cite{ku_hybrid_exact_2017}.
In this paper, we leverage a state-of-the-art IQCP solver provided by Gurobi~\cite{gurobi_gurobi_2023},
which significantly enhances the quality of the circuits we can construct, as demonstrated in \cref{sec:examples}.

Let us summarize the results of this section.
We proposed an ansatz class of hardware-tailored circuits,
and characterized the gauge freedom of implementing a logical Clifford gate.
We bring these two concepts together by formulating an IQCP that can be solved in order to obtain hardware-tailored circuit implementations of a logical Clifford gate.
This circuit can be optimized with respect to, e.g, two-qubit gate count.
The input parameters of our circuit construction framework are a reduced encoding operation $E'$, a target logical gate $C$, the length $l$ of the ansatz circuit $A_l$, and the hardware connectivity $\Gamma_\text{con}$ of a quantum device.
The output of the IQCP is a circuit $A_l$ (implementing $C$ on the logical level)
given as an alternating sequence of SCLs $B_i$ and hardware-tailored CZLs $G_i$,
as well as a solution for the reduced freedom matrix $F'_C$.
The latter just fixes the gauge outside the code space and is usually not of practical interest.
Nevertheless, $F'_C$ can be inspected to analyze how $A_l$ permutes the stabilizer group.
A Gurobi-based implementation of this framework is available as a Python package and can be installed using \inlinecode{pip install htlogicalgates}.

\section{Toward fault tolerance with flag gadgets}
\label{sec:fault_tolerance}

Fault tolerance is essentially a design principle~\cite{egan_fault_tolerant_2021}.
Its goal is that a logical operation still succeeds even if some of its individual physical building blocks are failing.
For sequences of Clifford gates and Pauli measurements, 
it is customary to analyze the spread of Pauli errors through the circuit.
For example, 
when a Pauli-$X$ error occurs on the auxiliary qubit in the middle of a stabilizer measurement,
then a so-called \emph{hook error} will propagate to some of the data qubits.
However, the measurement outcome of the auxiliary qubit remains unaffected and, therefore, it does not directly reveal the presence of the hook error.
By carefully designing the order of the two-qubit gates in the circuit (which is irrelevant in the error-free case), 
it is sometimes possible to ensure that the resulting hook error can be dealt with in a subsequent round of stabilizer measurements~\cite{tomita_low_distance_2014, li_direct_measurement_2018, huang_fault_tolerant_2020}.
For codes, where this approach fails, it remains possible to repair a stabilizer extraction circuit through the incorporation of a \emph{flag gadget}~\cite{chao_quantum_error_2018, chamberland_flag_fault_tolerant_2018, chao_flag_fault_tolerant_2020, anker_flag_gadets_2024, pato_optimization_tools_2024}.

Hook errors are only one of many errors one has to deal with.
In general, a logical operation for an $\llbracket n,k,d\rrbracket$ QECC is called \emph{fault-tolerant} (FT) if it succeeds even if up to $(d-1)/2$ arbitrary physical operations are failing, as this is the largest number of correctable errors in an idealized memory experiment.
Hereby, faults are modeled by the insertion of Pauli errors into the circuit:
incoming qubits, single-qubit gates, and measurements can each introduce one of three Pauli errors, 
while two-qubit gates can introduce one of fifteen two-qubit Pauli errors.
The logical operation is deemed successful if every  considered combination of faults will result in a correctable error.
Similarly, for error-detecting codes, one considers a logical operation to be FT if every combination of up to $d-1$ faults will result in a detectable error.
Note that, in order to remove such correctable or detectable errors, one has to perform stabilizer measurements~\footnote{For $d=2$, a single round of stabilizer measurements at the very end of the experiment is sufficient to detect and remove all errors from all FT gates in the circuit simultaneously.
}.

We would like to emphasize that the primary focus of the present work is not on innovations for achieving fault tolerance, as this topic has already been extensively addressed in the existing literature.
Instead, we take one step back, drop the FT requirement, and construct hardware-tailored logical Clifford gate implementations with optimized two-qubit gate counts; recall \cref{sec:new_clifford_framework}.
We envision that our circuits serve as a convenient foundation for subsequently obtaining FT logical Clifford gates.
Carrying out this second step in full generality requires significant further work and is therefore beyond the scope of the current paper.
Here, we restrict our analysis of FT gate-design to distance-$2$ error-detecting codes.
This serves both as a proof-of-principle theory%
---suggesting that it is likely possible to make our circuits FT for larger code distances---%
and as a demonstration that our techniques are ready for use in experimental implementations of early fault tolerance using error-detecting codes.

Consider an $\llbracket n,k, 2\rrbracket$ code and a Clifford circuit $U\in \Cn$ that implements some logical gate.
Our goal is to make $U$ FT.
As $d=2$, ensuring fault tolerance amounts to verifying that a single fault anywhere in the circuit yields a non-zero.
Thus, let $E\in \Pn$ be an $n$-qubit Pauli error that arises from a single fault in the circuit, i.e., the applied physical circuit is $EU$ instead of $U$. 
We can use notions from Ref.~\cite{chao_fault_tolerant_quantum_2018, roffe_the_coherent_2019, gonzales_quantum_error_2023} to get rid of this error.

\begin{lemma}[Detecting errors] \label{lem:ft_flag}
    In the above situation, 
    a flag gadget requiring no more than two physical qubits can catch the error $E$, without introducing further undetectable errors.
\end{lemma}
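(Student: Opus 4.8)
The plan is to reduce the fault tolerance of $U$ to an in-circuit detection problem and then exhibit an explicit flag gadget that solves it. First I would record that, since $E$ is the end-of-circuit image of a single fault, it equals $E=VPV^\dagger$, where one gate deposits a Pauli $P$ of weight at most two after some layer $j$ and $V$ collects the subsequent layers. I would then split into two cases. If $E$ anticommutes with some stabilizer generator, the terminal round of stabilizer measurements (cf.\ the $d=2$ footnote) already yields a nonzero syndrome and no flag is required. The remaining---and only dangerous---case is $E$ commuting with all of $\mathcal{S}$ while $E\notin\mathcal{S}$; then $E$ is a logical operator, and no Pauli measurement that preserves the encoded data can reveal it, so the error must be caught at its source during the circuit rather than afterwards.

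Second, I would trace $E$ back to its origin and build the gadget there. Because a single fault comes from one gate, $P$ has weight one for a single-qubit-gate, incoming-qubit, or measurement fault, and weight at most two for a $\CZ$-gate fault. The gadget I would insert is a flag ancilla $f$ (with a second ancilla when $P$ has weight two, so that two triggering errors cannot silently cancel on a single flag), coupled to the support of $P$ by a pair of two-qubit Clifford gates that bracket layer $j$ and whose product acts trivially on $f$ in the error-free case, followed by a measurement of the flag ancilla(s). I would choose the coupling type---$\CZ$-like to catch an $X$ component, $\CX$-like to catch a $Z$ component---so that $P$ anticommutes with the flag's detection operator; a nonidentity single-qubit $P$ (including $Y$) always has a component caught by one of these, so a single flag handles weight-one $P$ and at most two handle weight-two $P$, matching the claimed bound.

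Third, I would verify the two required properties. (i)~In an error-free run the two bracketing gates cancel, the flag returns to its initial state, and its outcome is deterministic. (ii)~When the dangerous fault occurs between the bracketing gates, $P$ anticommutes with the coupling and flips the flag, so $E$ is caught. (iii)~To show no new undetectable error is introduced, I would enumerate the single faults internal to the gadget---on $f$, on the second ancilla, or on a bracketing gate---and propagate each through the remaining layers $V$; using that $U$ maps $\mathcal{S}$ to itself and that the original $d=2$ code has no weight-one logicals, each such fault either trips the flag or leaves a data error that still anticommutes with some stabilizer, hence remains detectable by the terminal measurement.

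The hard part will be step (iii): certifying that the added gates cannot, under a single fault, manufacture a fresh undetectable logical error on the data. This needs a careful propagation analysis of gadget-localized faults through the tail $V$, and it is exactly here that the precise coupling type and the placement of the bracketing gates matter---they must be positioned so that a fault on a bracketing gate either trips the flag or deposits only a weight-$\le 2$, stabilizer-detectable data error, rather than reproducing the very logical error $E$ we are trying to suppress. The weight-two ceiling on $P$ is what keeps the ancilla count at two throughout.
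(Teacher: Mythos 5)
Your overall architecture coincides with the paper's: sandwich the dangerous fault location between two controlled-Pauli couplings to a flag qubit prepared in $\ket{+}$, chosen so that they cancel in the error-free run (your ``product acts trivially on $f$'' requirement is exactly the paper's backpropagation $Q=U^\dagger P U$), so that $E$ is caught by phase kickback on the flag's $X$-basis readout; the paper even notes explicitly that the two controlled-Pauli operators may be propagated to any two points in the circuit as long as the fault stays sandwiched between them, which is your localized variant. The genuine gap sits in your step~(iii), precisely where you yourself flagged ``the hard part'': an $X$ (or $Y$) fault on the flag qubit \emph{between} the two bracketing couplings commutes with the flag's $X$-basis measurement, so it trips nothing, and pushing it through the second coupling deposits the full detection Pauli on the data (since conjugating $X_f$ through a controlled-$R'$ yields $X_f\otimes R'$). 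Your fallback---that any such deposit is ``weight-$\le 2$, stabilizer-detectable'' because the code has no weight-one logicals---fails for $d=2$: weight-two Paulis can be logical. Worse, the detection Pauli is chosen to anticommute with $E$, and in the one dangerous case you isolated (where $E$ is logical) an anticommuting partner supported on two qubits can itself be a weight-two logical operator, i.e., exactly the fresh undetectable error the lemma forbids. Your second ancilla cannot repair this as you have motivated it: you introduce it ``so that two triggering errors cannot silently cancel,'' i.e., as another detector coupled to the data, and any data-coupled detector is equally blind to $X$-type faults on either ancilla.

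The paper closes this hole with a different use of the second qubit---a flag for the flag: both flags start in $\ket{+}$, a $\CZ$ between them is applied before the controlled-$Q$ and again after the controlled-$P$, and both are read out in $X$. Any $X$ or $Y$ fault on the first flag between the two flag--flag $\CZ$ gates propagates through the final $\CZ$ to a $Z$ on the second flag and flips its outcome; a $Z$ fault on the first flag flips its own outcome; and a coupling-gate fault acting as identity on the flag deposits only a weight-one data error, indistinguishable from an input error and hence detectable at $d=2$. Adding this mechanism, together with the corresponding case enumeration of gadget-internal faults, is what your sketch is missing to become a complete proof.
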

\begin{proof}
    Let $P\in \Pn$ be a Pauli operator that anticommutes with $E$. 
    (The flag gadget will catch all errors that anticommute with $P$.)
    Write $Q=U^\dagger PU\in \Pn$ for the backpropagated Pauli operator.
    We replace the $n$-qubit circuit $U$ with the following $(n+2)$-qubit circuit:
    (i) add two flag qubits initialized in $\ket{+}$, (ii) apply a $\CZ$ gate between the two flag qubits, (iii) apply a sequence of controlled-$X$, -$Y$, and -$Z$ gates that implements a controlled-$Q$ gate, where the first flag serves as the control and the code qubits are the targets, (iv) apply the circuit $U$ on the $n$ code qubits, (v) apply a controlled-$P$ gate (decomposed into two-qubit gates) from the first flag qubit to the code qubits, (vi) apply a $\CZ$ gate between the two flag qubits, and (vi) read out both flag qubits in the $X$ basis.
    In the absence of any errors, the controlled-$Q$ and -$P$ gates cancel and the measurement results are $0$  by construction for both flags, which proves the soundness of the proposed protocol.
    If the single fault occurs that leads to the error $E$ propagating out of the unitary circuit $U$, the first flag qubit will experience a phase kickback through the controlled-$P$ gate, which triggers that flag and the error is detected.
    If one of the $\CZ$ gates performed on the two flag qubits fails, there are multiple cases but only those are dangerous that do not trigger the flags, i.e., $X$ errors.
    The only such error that could lead to hook errors is an $X$ error on the first flag after the first $\CZ$ gate; but this error triggers the second flag.
    Similarly, if one of the two-qubit gates in the controlled-$Q$ or -$P$ construction fails, there are multiple cases to consider: (i) the error on the flag qubit is $I$, then no flag is triggered but the error on the code qubit is indistinguishable from a single-qubit error on the incoming qubit and does, therefore, not introduce a further undetectable error, (ii) the error on the flag qubit is $X$ or $Y$, then the second flag will be triggered, (iii) the error on the flag qubit is $Z$, then the flag itself will be triggered.
    These are all error sources that need to be considered, which finishes the proof.
\end{proof}

A few comments are in order.
While \cref{lem:ft_flag} gives a general recipe for catching otherwise undetectable errors,
it leaves a lot of room for potential improvement.
Instead of applying the controlled-Pauli operators at the beginning and end of the circuit, they can be propagated to any two points in the circuit, as long as the dangerous fault location remains sandwiched between them.
This can save some two-qubit gates, however, one might lose the ability to catch multiple errors at once.
On the other hand, this opens up the option to reuse flag qubits (sometimes even without measuring them).  
Also note that the second flag qubit is often unnecessary if all hook errors are detectable.
In this context, the ordering of the two-qubit controlled-Pauli gates plays a significant role, and fully leveraging this effect remains an open area for further research.
Nevertheless, we will make use of all of these possibilities in what follows.

\begin{figure*}
    \centering
    \includegraphics{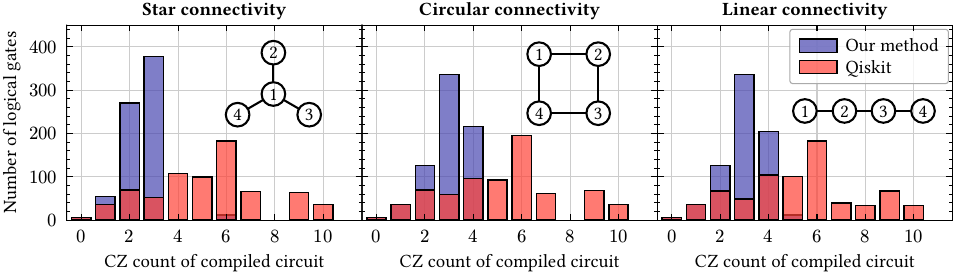}
    \caption{\justifying{}
    Distributions of $\CZ$ counts over the logical Clifford group $\mathcal{C}^2/\mathcal{P}^2$ of the $\llbracket 4,2,2\rrbracket$ iceberg code for the hardware-tailored circuit implementations from \cref{tab:iceberg:sim}. 
    The three hardware connectivities, to which the circuit implementations have been tailored, are shown as insets.
    For comparison, we also show distributions of $\CZ$ counts obtained via a straightforward Qiskit-based approach. 
    Our method achieves lower two-qubit gate counts by incorporating the gauge freedom from \Cref{thm:gauge_freedom} into the optimization process.
    }
    \label{fig:iceberg:bar_chart}
\end{figure*}

\section{Use case examples} \label{sec:examples}

In this section, we apply our framework from \cref{sec:new_clifford_framework}  in order to construct hardware-tailored circuit implementations for concrete logical Clifford gates and stabilizer codes.
The selected use cases serve as proof-of-principle demonstrations, highlighting different strengths of our new techniques. 
First, in \cref{sec:iceberg} we construct the full logical Clifford group for the $\llbracket4,2,2\rrbracket$ iceberg code under various connectivity constraints.
This illustrates the flexibility of our method and demonstrates that the selected circuit implementations are not cherry-picked.
In \cref{sec:tt12}, we present a logical $\CX$ gate for the $\llbracket12,2,3\rrbracket$ twisted toric code. 
This shows that our methods scale to experimentally relevant system sizes and effectively tackles the so-called addressability problem:
how can one implement logical gates for QECCs whose logical qubits are delocalized across all physical qubits?
In \cref{sec:color832}, we construct logical Hadamard gates for the $\llbracket8,3,2\rrbracket$ color code and make them fault-tolerant (FT) by carefully applying \cref{lem:ft_flag} from \cref{sec:fault_tolerance}. 
This demonstrates that our circuit implementations can indeed be made FT through a second construction step, and simultaneously represents a significant circuit engineering milestone for early-FT experiments with the $\llbracket8,3,2\rrbracket$ code, where highly-efficient FT logical Hadamard gate implementations were previously lacking.

\subsection{$\llbracket 4,2,2\rrbracket$ iceberg code} \label{sec:iceberg}

\begin{table}
    \centering
    \begin{tabular}{|c||cc|cc|}\hline
        \multirow{2}{*}{Connectivity} & \multicolumn{2}{c|}{$\CZ$ count} & \multicolumn{2}{c|}{Runtime}\\
                    &max&avg.& max                     & avg.        \\\hline\hline
         Star       & 6 &2.5 &$3600\unit{s}$           & \ $61\unit{s}$\ \\\hline
         Circular   & 4 &3.0 &$\phantom{1}436\unit{s}$ & \ $85\unit{s}$\ \\\hline
         Linear     & 5 &3.0 &$\phantom{1}508\unit{s}$ & \ $29\unit{s}$\ \\\hline
    \end{tabular}
    \caption{\justifying{}
    Circuit cost ($\CZ$ count) and classical preprocessing cost (runtime) for constructing hardware-tailored circuit implementations in the worst case (max) and on average (avg.) for all 720 logical Clifford gates of the  $\llbracket4,2,2\rrbracket$ iceberg code for three different connectivities, see \cref{fig:iceberg:bar_chart}.
    Our circuit implementations are constructed by solving the IQCP in \cref{eq:binary_optimization_problem} using our Gurobi-based open-source software, applied to an ansatz circuit $A_l$ of length $l=3$ with a timeout of $3600\unit{s}$.
    All computations were carried out on four cores of an Intel Xeon CPU E5-2695 v2 @$2.40\unit{GHz}$ with $20\unit{GB}$ of RAM.
    The solver performs reliably and fast.
    }
    \label{tab:iceberg:sim}
\end{table}

The first QECC for which we construct hardware-tailored logical Clifford gates is the four-qubit \emph{iceberg code}~\cite{rains_quantum_codes_1997}.
This $\llbracket4,2,2\rrbracket$ code belongs to a family of $\llbracket n,n-2,2\rrbracket$ codes with stabilizer generators $X^{\otimes n}$ and $Z^{\otimes n}$, where $n$ is even.
The $\llbracket4,2,2\rrbracket$ iceberg code has $k=2$ logical qubits and therefore $\vert \mathcal{C}^2/ \mathcal{P}^2\vert = 720$ logical Clifford gates.
Each of them is implementable in $12,288$ different gauges, recall \Cref{thm:gauge_freedom}.
Leveraging our new techniques, we optimize over all gauges and a variety of circuit templates (ansätze) to identify circuit implementations that minimize the number of $\CZ$ gates.
To demonstrate the flexibility of our method, we consider three connectivities: \textit{star}, \textit{circular}, and \textit{linear}, as shown in the insets of \cref{fig:iceberg:bar_chart}.
For all three connectivities and every logical Clifford gate, we succeed in constructing a circuit implementation with no more than three CZLs and four SCLs, i.e., with an ansatz $U_{A_l}$ of length $l=3$.
In \Cref{tab:iceberg:sim}, we present the maximum and average two-qubit gate counts of the constructed circuits, along with the maximum and average runtime of the solver that found them. 
In all cases, we see that no more than six physical $\CZ$ gates are required for implementing a logical two-qubit Clifford circuit.
We observe no significant difference in the quality of the obtained circuits.

Regarding the runtime of our classical circuit constructor, it is important to note that the leveraged Gurobi solver operates in two phases.
First, it identifies a feasible solution to \cref{eq:binary_optimization_problem}, corresponding to a valid circuit implementation of the target logical gate. 
Then, it attempts to prove optimality by searching for better feasible points and, if successful, replaces the initial solution with an improved one.
Since we aim to construct $3 \times 720$ circuit implementations, we impose a one-hour timeout on the Gurobi solver.
As shown in \cref{tab:iceberg:sim}, this timeout is only reached in the case of star connectivity. 
Even then, it affects only the proof of optimality;
the solver still produced valid and high-quality solutions for all 720 logical Clifford gates.

We also compare our circuit implementations to readily obtainable baseline alternatives.
For every logical Clifford gate, 
we use a Qiskit optimizer~\cite{qiskit_qiskit_2023} to compress the trivial implementation $U_\text{triv}$ defined in \cref{sec:target}. 
Since Qiskit does not support optimization over gauges $F\in \mathcal{F}$, we fix $F = \mathbbm{1}$ prior to optimization. 
We do not attempt a brute-force search over all $12,288$ possible gauges. 
After obtaining a circuit, we transpile it to the three hardware connectivities under consideration. 
For each connectivity, we present two histograms in \cref{fig:iceberg:bar_chart}, showing the two-qubit gate counts for our method (blue) and the Qiskit baseline (red).
Our circuits consistently achieve lower $\CZ$ counts compared to the Qiskit alternatives. 
Furthermore, our circuits exhibit virtually no outliers (apart from twelve instances with six $\CZ$ gates), 
further underscoring the advantage of a global optimization approach over conventional circuit optimization techniques.


\subsection{$\llbracket 12,2,3\rrbracket$ twisted toric code} \label{sec:tt12}

\begin{figure}
    \centering
    \includegraphics[scale=\typstscale]{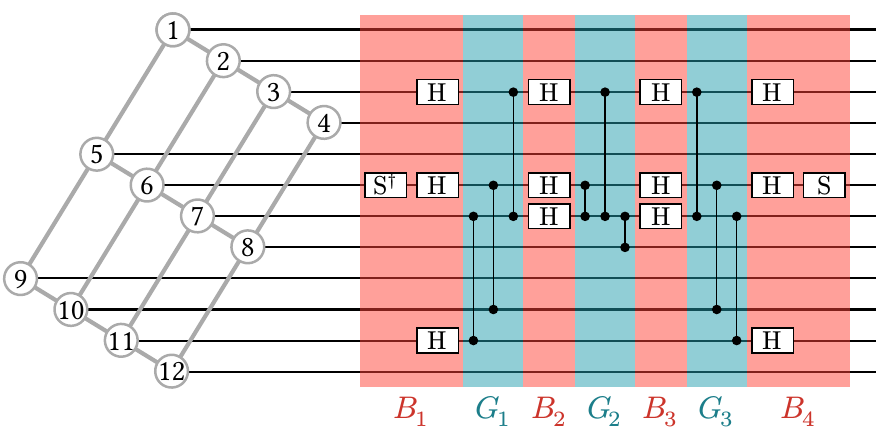}
    \caption{
    \justifying{}
    A hardware-tailored circuit implementation of the $\overline{\CX_{2,1}}$ gate for the $\llbracket12,2,3\rrbracket$ twisted toric code.
    The circuit (right) is tailored to a square-grid connectivity (left) and requires nine $\CZ$ gates.
    It was constructed by solving \cref{eq:binary_optimization_problem} using an ansatz $U_{A_l}$ with $l=3$ controlled-$Z$ gate layers (CZLs) and four single-qubit Clifford gate layers (SCLs).
    The evident symmetry suggests that computer-generated circuits like this might be generalizable to larger twisted toric codes.
    }
    \label{fig:tt12:cx}
\end{figure}


Next, we consider the $\llbracket12,2,3\rrbracket$ twisted toric code~\cite{breukmann_balanced_product_2021} 
and tackle the aforementioned addressability problem.
When constructing hardware-tailored logical circuit implementations for this code,
we do not explicitly exploit any of its symmetries.
Instead, we simply inform our solver for \cref{eq:binary_optimization_problem}  that the  stabilizer group is generated by
$X_1X_{2}X_6X_7$,
$X_1X_4X_{11}X_{12}$,
$X_2X_3X_9X_{10}$, 
$X_3X_{4}X_5X_8$,
$X_5X_{6}X_{10}X_{11}$, 
$Z_1Z_2Z_{9}Z_{12}$,
$Z_1Z_4Z_5Z_6$,
$Z_2Z_3Z_7Z_8$,
$Z_3Z_4Z_{10}Z_{11}$,
and $Z_{5}Z_8Z_9Z_{10}$,
and that the logical Pauli operators are chosen as 
$\overline{X}_1=X_1X_5X_9$, 
$\overline{Z}_1=Z_1Z_2Z_3Z_4$,
$\overline{X}_2=X_1X_2X_3X_4$, 
and 
$\overline{Z}_2=Z_2Z_6Z_{10}$.
From \cref{eq:freedom_number}, 
we know that for each logical Clifford gate, there exist approximately $1.5 \times 10^{58}$ different implementations that differ only in their action on states outside the code space.

Assuming a $3\times4$ square-grid connectivity,
we tailor circuit implementations of the logical controlled-$X$ gate with control qubit 2 and target qubit 1.
Note that our method is not limited to this example.
First, we consider an ansatz length $l=2$ and succeed in constructing a circuit with eleven $\CZ$ gates (not shown).
By increasing to $l=3$, we find an even shorter circuit with only nine $\CZ$ gates that is displayed in \cref{fig:tt12:cx}.
Interestingly, this computer-generated circuit appears to exhibit a nontrivial structure:
the first and the last SLCs are inverses of each other, i.e., $B_1=B_4^{-1}$.
The same is true for the (self-inverse) inner SCLs and the outer CZLs, i.e.,  $B_2=B_3^{-1}=B_3$ and  $G_1 =G_3^{-1}=G_3$.
This emergent structure spurs hope that, despite the NP-hardness of solving \cref{eq:binary_optimization_problem}, our software can be used to construct and analyze small-scale logical gates, and that these constructions, once understood, may be analytically generalized to larger codes.
In this context, it is important that one can efficiently verify whether a candidate circuit implements a desired logical Clifford gate,
see \cref{app:lem:different_logical_clifford_gates} in \cref{app:sec:clifford}.


\subsection{$\llbracket 8,3,2\rrbracket$ color code} \label{sec:color832}

\begin{table}[]
    \centering
    \begin{tabular}{|c||c|c|} \hline
       Gate  & Teleportation-based~\cite{wang_fault_tolerant_2024} &  Hardware-tailored \\ 
       \hline\hline
        $\overline{{H}}$ &  
        \begin{tabular}{lr}
            \CZ{} count: & 26 \\
             Consumed qubits: & 10 \\
        \end{tabular}&
        \begin{tabular}{lr}
            \CZ{} count: & 13 \\ 
             Consumed.~qubits: & 1 \\
        \end{tabular}
        \\ \hline
        $\overline{{H}^{\otimes 2}}$  &
        \begin{tabular}{lr}
            \CZ{} count: & 37 \\
             Consumed qubits: & 13 \\
        \end{tabular}&
        \begin{tabular}{lr}
            \CZ{} count: &  16 \\
             Consumed qubits: & 1 \\
        \end{tabular}
        \\  \hline
        $\overline{{H}^{\otimes 3}}$ &
        \begin{tabular}{lr}
            \CZ{} count: & 63 \\
             Consumed qubits: & 23 \\
        \end{tabular}&
        \begin{tabular}{lr}
            \CZ{} count: &  19 \\ 
             Consumed qubits: & 1 \\
        \end{tabular}\\ \hline
    \end{tabular}
    \caption{\justifying{}
    Resource requirements of circuit implementations of FT logical Hadamard gates for the $\llbracket 8,3,2\rrbracket$ color code. 
    Consumed qubits refers to the number of state initializations and measurements required for a single implementation of the logical gate.
    The method from Ref.~\cite{wang_fault_tolerant_2024} relies on a teleportation routine into the $\llbracket 4,2,2\rrbracket$ iceberg code.
    In contrast, our hardware-tailored circuit implementations require only a single auxiliary qubit to achieve fault tolerance, see \cref{fig:color832:gates}.
    }
    \label{tab:color832:counts}
\end{table}

\begin{figure*}
    \begin{subfigure}{.49\textwidth}
        \centering
        \includegraphics[scale=\typstscale]{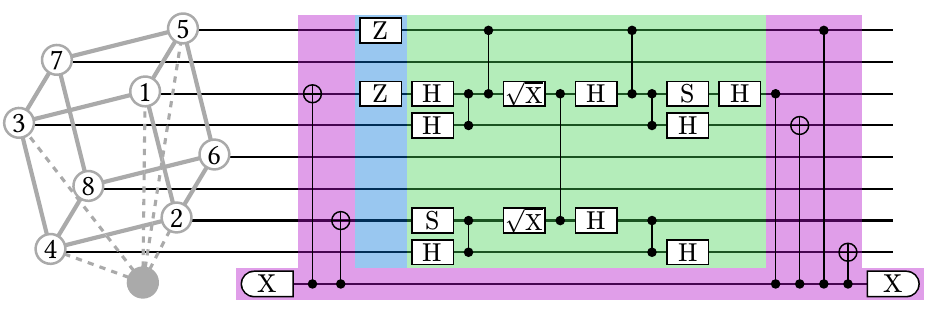}
        \caption{Logical Hadamard gate, $\overline{{H}_1}$, on the first logical qubit.}
        \label{fig:color832:H0}
    \end{subfigure}
    \begin{subfigure}{.49\textwidth}
        \centering
        \includegraphics[scale=\typstscale]{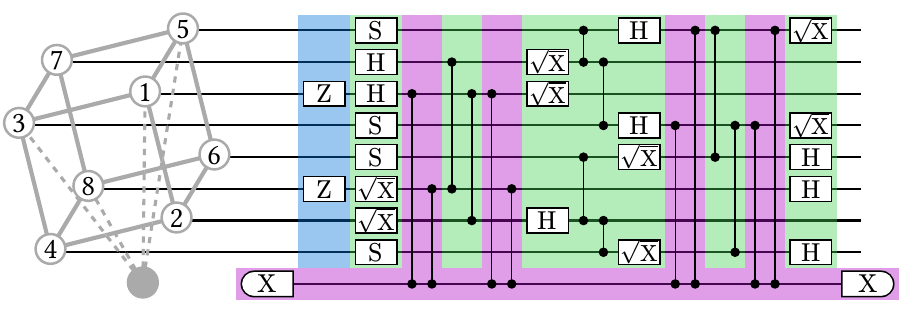}
        \caption{Logical Hadamard gates, $\overline{H^{\otimes 2}_{1,2}}$, on two logical qubits.}
        \label{fig:color832:H0H1}
    \end{subfigure}
    
    \begin{subfigure}{.60\textwidth}
        \centering
        \includegraphics[scale=\typstscale]{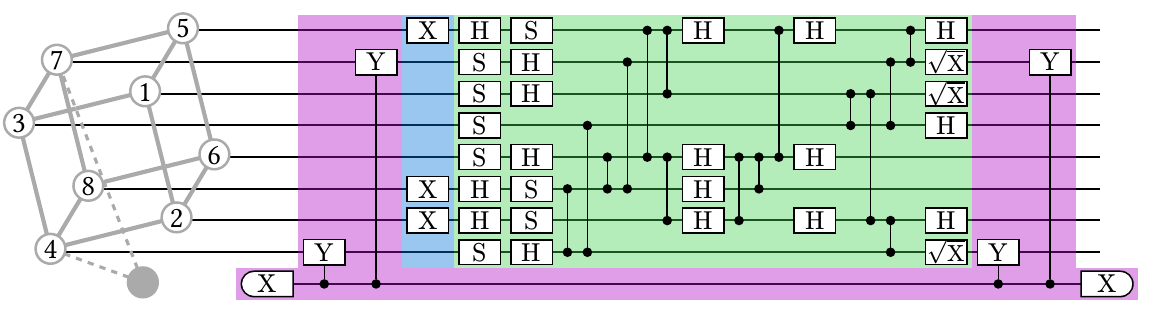}
        \caption{Logical Hadamard gates, $\overline{H^{\otimes 3}_{1,2,3}}$, on all three logical qubits.}
        \label{fig:color832:H0H1H2}
    \end{subfigure}
    \begin{subfigure}{.39\textwidth}
        \centering
        \includegraphics[scale=\typstscale]{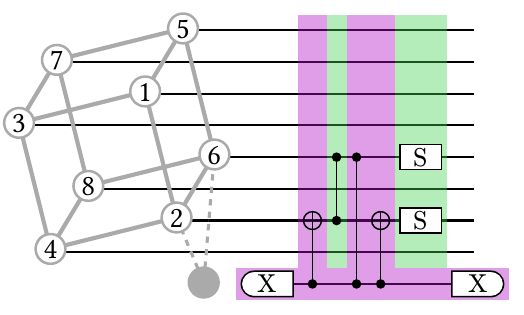}
        \caption{Logical phase gate, $\overline{{S}_1}$, on the first logical qubit.}
        \label{fig:color832:S0}
    \end{subfigure}
    \caption{\justifying{}Hardware-tailored circuit implementation of FT logical  (a-c) Hadamard and (d) phase gates for the $\llbracket 8,3,2\rrbracket$ color code.
    By rotating the cube, other logical qubits can be addressed.
    The unitary Clifford subcircuits (green) are tailored to a cube connectivity (left)
    by solving the IQCP in \cref{eq:binary_optimization_problem}.
    Then, the Pauli frame (blue) is adjusted by applying Theorem~2 in Ref.~\cite{deheane_clifford_group_2003}.
    Finally, a suitable flag gadget (purple) is constructed by applying \cref{lem:ft_flag} to make the circuit implementation fault-tolerant,
    where dashed lines in the graph on the left indicate the connectivity required by the flag gadget.
    Using stim~\cite{gidney_stim_2021}, we verify that every fault (recall \cref{sec:fault_tolerance}) results in a detectable error.
    For the Hadamard gates, fault tolerance is independently confirmed through circuit-level noise simulations, see \cref{fig:color832:sim}.
    }
    \label{fig:color832:gates}
\end{figure*}
\begin{figure}
    \centering
    \includegraphics{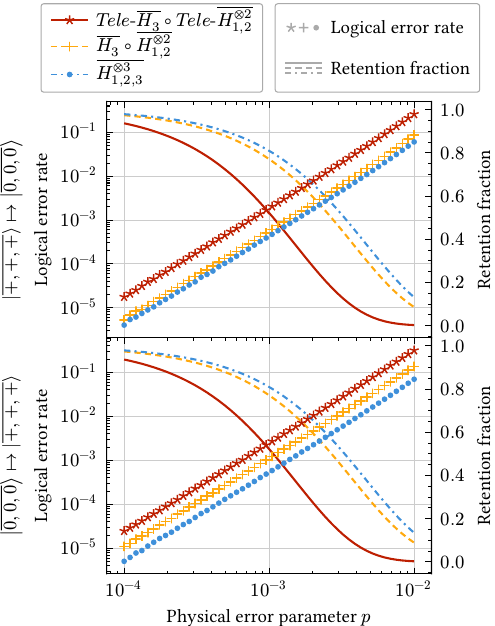}
    \caption{\justifying{} 
    Logical error rate (left $y$-axis) and retained fraction (right $y$-axis) of shots after discarding executions with non-trivial syndromes, based on circuit-level simulations of the gates from \cref{tab:color832:counts}.
    The parameter $p$ ($x$-axis) and details of the simulated circuits are explained in \cref{app:sec:simulation}.
    Each data point represents an average over $10^{8}$ circuit executions simulated using stim~\cite{gidney_stim_2021}.
    All protocols are fault-tolerant, and the teleportation-free approaches perform better due to their lower resource demands, see \cref{tab:color832:counts}.
    }
    \label{fig:color832:sim}
\end{figure}

The final code considered in this paper is the $\llbracket 8,3,2\rrbracket$ color code, often referred to as the ``smallest interesting color code'' due to its remarkable ability of supporting a transversal non-Clifford gate~\cite{campbell_the_smallest_interesting_2016}.
More precisely, applying the operator $(T \otimes T^\dagger)^{ \otimes 4}$ implements the gate $\CCZ = \mathrm{diag}(1,\ldots,1,-1)$ on the three logical qubits.
It is well known that the gate set comprising $\CCZ$ and Hadamard gates is universal in a certain sense~\cite{shi_both_toffoli_2003},
however, it is also worth noting that the group they generate contains only real matrices.
As such, there is value in augmenting the gate set with the Clifford gate $S = \mathrm{diag}(1, \iu)$.
Since the $\CCZ$ gate is already transversal, 
the Eastin–Knill theorem implies that the logical Hadamard gate for the $\llbracket 8,3,2\rrbracket$ code must require a more complex circuit implementation~\cite{eastin_restrictions_on_transversal_2009}.

To our knowledge, the only fully worked-out example of implementing FT logical Hadamard gates is based on a teleportation approach~\cite{wang_fault_tolerant_2024}.
In this protocol, one or two logical qubits are teleported into the $\llbracket 4,2,2 \rrbracket$ iceberg code, which supports a \Swap{}-transversal two-qubit Hadamard gate. 
After the operation is applied, the qubits are teleported back into the color code.
We refer to these protocols as $\smash{\textit{Tele-}\overline{H}}$
and $\smash{\textit{Tele-}\overline{H^{\otimes 2}}}$ and provide their resource costs in \cref{tab:color832:counts}.
For example, the $\textit{Tele-}\overline{H}$ protocol has a $\CZ$ count of 26 and consumes a total of ten auxiliary qubits (four for the iceberg code and six for flagging).
Similarly, one can implement Hadamard gates on all three logical qubits of the $\llbracket8,3,2\rrbracket$ color code by applying 
first $\smash{\textit{Tele-}\overline{H^{\otimes 2}_{1,2}}}$ 
then  $\smash{\textit{Tele-}\overline{H_3}}$,
with costs ($\CZ$ count and consumed qubits) that simply add up.
If resets are available and parallelization is sacrificed,
only six auxiliary qubits are required at the same time.
Notably, no experimental implementation of teleportation-based Hadamard gates for the $\llbracket 8,3,2 \rrbracket$ code has been reported in the existing literature.

With the methods developed in this paper, we are able to directly decompose logical Clifford circuits into physical ones, without relying on teleportation into a second code that supports these gates transversally.
In \cref{fig:color832:gates}, we present such teleportation-free implementations of single- and multi-qubit logical Hadamard gates on an arbitrary number of logical qubits
alongside a single-qubit logical phase gate.
Moreover, we present flag gadgets that make these circuits FT in the sense defined in \cref{sec:fault_tolerance}.
In all cases, a single flag qubit suffices to catch all undetectable errors 
that would be introduced by the hardware-tailored circuits alone.
In other words, the second flag qubit in the construction of \cref{lem:ft_flag} is not required in this context due to the absence of undetectable hook errors.
For the two-qubit logical Hadamard gate shown in \cref{fig:color832:H0H1}, we need to apply \cref{lem:ft_flag} twice. However, note that it is possible to reuse a single flag qubit without resetting it.
The total resource requirements of our teleportation-free circuits can be directly inferred from \cref{fig:color832:gates} and are provided in \cref{tab:color832:counts} for a direct comparison with the teleportation-based Hadamard gates from Ref.~\cite{wang_fault_tolerant_2024}.
Our circuits consume an order of magnitude fewer qubits and require only half as many physical $\CZ$ gates for the single- and two-qubit logical Hadamard gates.
To realize a three-qubit logical Hadamard gate using the teleportation-based approach, two circuits must be applied sequentially, resulting in additive resource costs.
This sequential approach can be avoided with the flexible method developed in \cref{sec:new_clifford_framework},
resulting in a three times cheaper (in terms of $\CZ$ count) implementation of the three-qubit logical Hadamard gate.

To predict the performance of our circuit implementations, we carry out circuit-level simulations using stim~\cite{gidney_stim_2021}.
Our simulations are based on the error model described in \cref{app:sec:simulation}, where all physical error rates are proportional to a single parameter $p$.
We compare the following three methods for FT mapping 
$\smash{\ket{\overline{0,0,0}}} $ to
$\smash{\ket{\overline{\mathord{+},\mathord{+},\mathord{+}}}}$, and vice versa,
and present the simulation results in \cref{fig:color832:sim}.
First, we apply the sequence of two teleportation-based circuits from Ref.~\cite{wang_fault_tolerant_2024} to implement 
 $\smash{\textit{Tele-}\overline{H_3}} \circ
 \smash{\textit{Tele-}\overline{H^{\otimes 2}_{1,2}}}$ (red stars).
Second, $\smash{\overline{H_3} \circ \overline{H^{\otimes 2}_{1,2}}}$
is implemented by sequentially applying the circuits from \cref{fig:color832:H0H1} and a straightforward adaptation of \cref{fig:color832:H0} (yellow plusses).
Finally, we also apply $\smash{\overline{H^{\otimes 3}_{1,2,3}}}$  in a single step, using the circuit from \cref{fig:color832:H0H1H2} (blue circles).
For details about FT state preparation and readout, see \cref{app:sec:simulation}.
In all cases, we observe in \cref{fig:color832:sim} the characteristic FT scaling of the logical error rate to be $O(p^2)$.
This confirms that every single fault in the circuit is detected and removed in a postprocessing step.
The fraction of shots retained after discarding all circuit executions with violated detectors is plotted as a continuous curve in the background of \cref{fig:color832:sim}.
For both hardware-tailored options, we observe that this retained fraction decreases from nearly 100\% at $p=10^{-4}$ to about $10\%$ at $p=10^{-2}$.
The teleportation-based approach exhibits the same qualitative behavior, but with a significantly lower retained fraction throughout.
Regarding the logical error rates, we observe that the hardware-tailored circuit performing all logical Hadamard gates simultaneously (blue circles) yields the best performance.
This is expected, as it requires the fewest resources and thus introduces the fewest potential error mechanisms, recall \cref{tab:color832:counts}.
Strikingly, this represents an improvement of approximately one order of magnitude over the teleportation-based protocol.
A minor effect visible in \cref{fig:color832:sim} is that the error rates are slightly larger for the protocol mapping $\ket{\overline{0,0,0}}$ to $\ket{\overline{\mathord{+},\mathord{+},\mathord{+}}}$ (lower panel) than for the reverse direction (upper panel).
This suggests the presence of more detrimental error mechanisms when the three-fold Hadamard gate is applied to $\ket{\overline{\mathord{+},\mathord{+},\mathord{+}}}$.

%


\section{Conclusion} \label{sec:conclusion}

In this work, we developed powerful techniques to decompose logical Clifford circuits into physical ones for arbitrary stabilizer codes.
Starting from the symplectic representation of the Clifford group, we introduced a class of hardware-tailored ansatz circuits parameterized by binary variables. Similarly, we parameterized all possible gauges of a target logical gate by identifying the group of logical Clifford stabilizers associated with the given code.
This framework ultimately reduces circuit construction to solving and optimizing an \emph{integer quadratically 
constrained program} (IQCP).
We provide an open-source implementation, 
available as a Python package on \href{https://github.com/erkue/htlogicalgates}{https://github.com/erkue/htlogicalgates}.

We have demonstrated the viability of our approach across a variety of gates and quantum error-correcting codes. 
To support future experiments in early fault tolerance, we tailored logical Hadamard gates with flag gadgets for the $\llbracket8,3,2\rrbracket$ color code.
Through circuit-level noise simulations, 
we have shown that our constructions not only consume significantly fewer auxiliary qubits than an existing teleportation-based approach but also reduce the logical error rate by an order of magnitude.

From a broader perspective, the approach introduced here builds upon and extends ideas from global optimization~\cite{miller_hardware_tailored_2024} to identify highly efficient, hardware-tailored circuits for the implementation of {quantum error-correcting codes. 
It complements circuit design methodologies based on algebraic rewrites—such as those using the ZX calculus~\cite{ZX} or three-colored formalisms~\cite{Ruby}—which sequentially manipulate and optimize circuits through structured transformations.

Our framework does not rely on underlying symmetries of the codes or their logical gates. 
As a result, it provides a flexible starting point for in-depth analyses of stabilizer codes and their logical Clifford gates under realistic hardware constraints. 
In future work, our approach may be adapted to address related problems in circuit discovery. 
For instance, by adapting the IQCP presented in this paper, it may be possible to construct hardware-tailored state preparation circuits, addressing the well-studied problem of fault-tolerant logical state preparation~\cite{zen_quantum_circuit_2024, peham_automated_synthesis_2025}. 
Moreover, our framework could potentially be extended to design hardware-tailored circuits for code switching~\cite{anderson_fault_tolerant_2014, bombin_gauge_color_2015, butt_fault_tolernat_code_switching_2024}.



\section{Acknowledgments}

The authors would like to thank 
Antonio Anna Mele, 
Lennart Bittel, 
David Pahl, 
Lukas Pahl, 
Arthur Pesah,
and Armanda Quintavalle for stimulating discussions.
This project has received financial support by the Unitary Foundation, 
the BMBF (QSolid, MuniQC-Atoms, QuSol),
the Munich Quantum Valley, 
Berlin Quantum, 
the Quantum Flagship programs MILLENION and PASQUANS2, 
the DFG (CRC 183), the European Research Council (DebuQC), 
and the Alexander-von-Humboldt Foundation.
%
This research has been sponsored by IARPA and the Army Research Office, under the Entangled Logical Qubits program, and was accomplished under Cooperative Agreement Number W911NF-23-2-0212. The views and conclusions contained in this document are those of the authors and should not be interpreted as representing the official policies, either expressed or implied, of IARPA, the Army Research Office, or the U.S.~Government. The U.S.~Government is authorized to reproduce and distribute reprints for Government purposes notwithstanding any copyright notation herein.  J.R. is funded by by EPSRC Grants EP/T001062/1 and EP/X026167/1.

%


\setcounter{secnumdepth}{2} 
\appendix

\section{Characterization of logical Clifford gates} \label{app:sec:clifford}

In this appendix, we review several well-known results on logical gates, with an emphasis on Clifford operations for stabilizer codes.
This provides the necessary background to understand the origin of their gauge freedom, which will be fully characterized in \Cref{thm:gauge_freedom} and further simplified in \Cref{crl:optimization}.
Throughout this section, let $\mathcal{L}$ be the code space of an $\llbracket n,k,d\rrbracket$ stabilizer code,  $ \{S_1,\ldots, S_{n-k}\}$ a set of stabilizer generators, and $\mathcal{S}=\langle S_1,\ldots, S_{n-k}\rangle $ its stabilizer group.
Moreover, let $\smash{\overline{X_1},\ldots,\overline{X_k}}$ and $\smash{\overline{Z_1},\ldots,\overline{Z_k}}$ denote a choice of logical Pauli-$X$ and -$Z$ operators for $\mathcal{L}$, respectively.

Let us start with the following lemma, which will only be used in this appendix.
\begin{lemma}[Inverses of logical gates are also logical gates] \label{app:lem:logical_gates_inverse}
    Let $U\in \mathcal{U}(2^n)$ be an $n$-qubit unitary.
    Then, $U$ is a logical gate for $\mathcal{L}$ if and only if (iff) the same is true for $U^\dagger$.
\end{lemma}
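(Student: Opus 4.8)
The plan is to reduce the statement to the defining property of a logical gate and then observe that this property is manifestly symmetric under passing from $U$ to $U^\dagger$. Recall from \cref{sec:notation} that an $n$-qubit unitary $V$ implements a logical operation on $\mathcal{L}$ precisely when it commutes with every stabilizer generator, i.e.\ $[V, S_j] = 0$ for all $j \in \{1,\ldots, n-k\}$ (equivalently, when $V$ maps the code space $\mathcal{L}$ onto itself). I would take this characterization as the working definition and show that the condition $[U, S_j]=0$ holds for all $j$ if and only if $[U^\dagger, S_j]=0$ holds for all $j$.

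The key step is to take the Hermitian adjoint of the commutator. Since the stabilizer generators are Hermitian by assumption, $S_j = S_j^\dagger$, so for each $j$ one has $[U, S_j]^\dagger = (U S_j - S_j U)^\dagger = S_j U^\dagger - U^\dagger S_j = -[U^\dagger, S_j]$. Hence $[U, S_j] = 0$ if and only if $[U^\dagger, S_j] = 0$, and taking the conjunction over all $j$ shows that $U$ commutes with all stabilizer generators exactly when $U^\dagger$ does, which is the claim. Alternatively, I could argue at the level of the code space directly: $U$ is a logical gate iff $U\mathcal{L} = \mathcal{L}$, and since $U$ is a unitary bijection with $U^{-1} = U^\dagger$, applying $U^\dagger$ to both sides yields $\mathcal{L} = U^\dagger \mathcal{L}$, so that $U^\dagger$ is a logical gate as well; the converse then follows by the same reasoning applied to $U^\dagger$, using $(U^\dagger)^\dagger = U$.

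There is no substantial obstacle here: the entire content lies in recognizing that the defining condition for a logical gate is invariant under the map $U \mapsto U^\dagger$, which in the commutator formulation is a one-line consequence of the Hermiticity of the stabilizer generators. The only point requiring a modicum of care is to invoke the correct characterization of logical gates---commutation with the generators, equivalently preservation of $\mathcal{L}$---and to note that it is genuinely a two-sided equivalence, so that both directions of the ``iff'' are covered by the same symmetric argument.
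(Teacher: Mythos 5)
Your primary (commutator) argument rests on a characterization that is not equivalent to the definition the lemma actually needs. In this appendix a logical gate is defined by the containment $U\mathcal{L}\subseteq\mathcal{L}$, and the gates to which the lemma is subsequently applied (in the proof of \cref{app:lem:logical_clifford_gates}) are only required to permute the stabilizer group, $USU^\dagger\in\mathcal{S}$; they need not commute with the individual generators $S_j$. Commuting with every generator is sufficient for preserving $\mathcal{L}$ but not necessary: a transversal Hadamard on a self-dual CSS code such as the $\llbracket 7,1,3\rrbracket$ Steane code maps $X$-type generators to $Z$-type generators, so $[U,S_j]\neq 0$, yet it is a logical gate---and the logical Hadamard circuits constructed in this very paper are of exactly this kind (the main-text ``iff'' you quote is loose on this point; condition (iii) of \cref{app:lem:logical_clifford_gates} is the correct general criterion). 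Consequently your identity $[U,S_j]^\dagger=-[U^\dagger,S_j]$, while algebraically correct, proves the lemma only for the proper subclass of logical gates that fix each generator, which would break the later application of the lemma.

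Your alternative subspace argument is the right route---it is the paper's route---but as written it assumes the crux. You assert that $U$ is a logical gate iff $U\mathcal{L}=\mathcal{L}$, whereas the definition only gives $U\mathcal{L}\subseteq\mathcal{L}$; upgrading this inclusion to an equality is precisely the mathematical content of the lemma (if ``onto'' were the definition, the statement would be a triviality, as your own one-line derivation $\mathcal{L}=U^\dagger U\mathcal{L}=U^\dagger\mathcal{L}$ shows). The repair is short and is exactly what the paper does: since $U$ is injective and $\mathcal{L}$ is finite-dimensional, $U\mathcal{L}$ is a subspace of $\mathcal{L}$ with $\dim U\mathcal{L}=\dim\mathcal{L}=2^k$, hence $U\mathcal{L}=\mathcal{L}$; the paper phrases this by sending a basis $\mathcal{B}$ of $\mathcal{L}$ to the linearly independent set $U\mathcal{B}\subseteq\mathcal{L}$ of cardinality $2^k$. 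With that dimension-counting line inserted, your second argument becomes a complete proof essentially identical to the paper's, and the ``iff'' follows by the symmetry $(U^\dagger)^\dagger=U$ just as you say.
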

\begin{proof}
    It suffices to prove that the condition is sufficient as the roles of $U$ and $U^\dagger$ are interchangeable.
    Thus, let $U$ be a logical gate, i.e., for all code words $\smash{\ket{{\psi}}} \in \mathcal{L}$ it holds  $U\smash{\ket{{\psi}}} \in \mathcal{L}$.
    Let $\mathcal{B} = \smash{\{\ket{{\psi_1}} , \ldots, \ket{{\psi_{2^k}}}  \}}$ be a vector space basis of $\mathcal{L}$. 
    Then, $U\mathcal{B} = \smash{\{U\ket{{\psi_1}} , \ldots, U\ket{{\psi_{2^k}}} } \}$ is a subset of $\mathcal{L}$.  
    Since $U $ is injective, $U\mathcal{B}$ is linearly independent.
    Because of $\vert U\mathcal B\vert = 2^k$, it follows that $\mathcal{B}'=U\mathcal{B}$ is a basis of $\mathcal{L}$.
    By construction, the inverse of $U$  maps $\mathcal{B}'$ to $U^\dagger \mathcal{B}'=\mathcal{B} \subset L$. 
    By linearity, it follows that $U^\dagger\ket{{\psi}}$ lies in $\mathcal{L}$ for all $\ket{{\psi}}\in \mathcal{L}$, which finishes the proof.
\end{proof}

Next, we formulate a useful condition for verifying that a Clifford circuit implements a gate on the logical level.

\begin{lemma}[Logical Cliffords permute the stabilizer group] \label{app:lem:logical_clifford_gates}
Let $U\in \Cn$ be an $n$-qubit Clifford gate. Then, the following conditions are equivalent:
    \begin{itemize}
        \item[(i)] $U$ is a logical gate for $\mathcal{L}$.
        \item[(ii)] $U$ is a logical Clifford gate for $\mathcal{L}$.
        \item[(iii)] For all $S\in \mathcal{S}$, we have $USU^\dagger\in \mathcal{S}$.
        \item[(iv)] For all $S\in \{S_1,\ldots, S_{n-k}\}$, we have $USU^\dagger\in \mathcal{S}$.
    \end{itemize}
\end{lemma}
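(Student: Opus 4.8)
The plan is to isolate one auxiliary fact that drives almost everything, dispatch the easy implications directly, and then chase the remaining ones around a short cycle. The fact I would establish first is the standard characterization that the stabilizer group is exactly the set of Pauli operators acting as the identity on the code space, i.e.\ $\mathcal{S} = \{P \in \Pn : P|_{\mathcal{L}} = \mathbbm{1}\}$. This follows from writing the code-space projector as $\Pi = 2^{-(n-k)}\sum_{S\in\mathcal{S}} S$: the inclusion $\mathcal{S}\subseteq\{P : P\Pi=\Pi\}$ is immediate, and conversely $P\Pi=\Pi$ rearranges to $\sum_{S\in\mathcal{S}} PS = \sum_{S\in\mathcal{S}} S$, so by linear independence of distinct Pauli operators the coset $P\mathcal{S}$ must coincide with $\mathcal{S}$, forcing $P\in\mathcal{S}$. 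I would also record the companion fact that a Pauli operator maps $\mathcal{L}$ into itself iff it commutes with every generator, i.e.\ iff it lies in the normalizer of $\mathcal{S}$ and hence represents a logical Pauli (modulo $\mathcal{S}$); the $\pm$ dichotomy of Pauli (anti)commutation together with $-\mathbbm{1}\notin\mathcal{S}$ makes this elementary.

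With this in hand, the trivial implications are (ii)$\Rightarrow$(i) (a logical Clifford gate is in particular a logical gate) and (iii)$\Rightarrow$(iv) (the generators lie in $\mathcal{S}$). For (iv)$\Rightarrow$(iii) I would use that conjugation by $U$ is a group homomorphism on $\Pn$: since $\mathcal{S}$ is generated by the $S_i$, the hypothesis $U S_i U^\dagger\in\mathcal{S}$ propagates to every product, giving $U\mathcal{S}U^\dagger\subseteq\mathcal{S}$. To obtain (i)$\Rightarrow$(iii), I would first invoke \cref{app:lem:logical_gates_inverse} so that $U^\dagger$ also preserves $\mathcal{L}$; then for $S\in\mathcal{S}$ and $\ket{\psi}\in\mathcal{L}$ we have $USU^\dagger\ket{\psi} = U(U^\dagger\ket{\psi})=\ket{\psi}$, so $USU^\dagger$ fixes $\mathcal{L}$ pointwise. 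As $U$ is Clifford, $USU^\dagger$ is a Pauli operator, and the auxiliary characterization then places it in $\mathcal{S}$. Conversely, for (iii)$\Rightarrow$(i), note that $U\mathcal{S}U^\dagger\subseteq\mathcal{S}$, together with injectivity of conjugation and finiteness of $\mathcal{S}$, yields $U\mathcal{S}U^\dagger=\mathcal{S}$, hence $U^\dagger S_i U\in\mathcal{S}$; then $S_i U\ket{\psi} = U(U^\dagger S_i U)\ket{\psi} = U\ket{\psi}$ for $\ket{\psi}\in\mathcal{L}$, so $U\ket{\psi}$ is a $+1$ eigenstate of every generator and thus lies in $\mathcal{L}$. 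At this point (i), (iii), and (iv) are mutually equivalent.

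It remains to upgrade (i) to (ii), i.e.\ to show that a physical Clifford $U$ preserving $\mathcal{L}$ induces a $k$-qubit \emph{Clifford} on the logical subspace. Here I would argue that conjugation by $U$ sends each logical Pauli $\overline{X}_i,\overline{Z}_i$ to a physical Pauli (since $U$ is Clifford) that still maps $\mathcal{L}$ into itself (since $U$, the logical Pauli, and $U^\dagger$ each preserve $\mathcal{L}$); by the companion fact this image lies in the normalizer of $\mathcal{S}$ and therefore restricts to a logical Pauli on $\mathcal{L}$. Consequently the induced unitary $\overline{U}$ conjugates the logical Pauli group into itself, which is exactly the statement that $\overline{U}$ is a $k$-qubit Clifford. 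I expect this final step to be the main obstacle, both because it requires the extra normalizer characterization and because one must track the inherited signs to confirm that the induced action is a genuine logical Clifford rather than merely a code-space symmetry; the other implications are essentially bookkeeping once the projector identity is in place.
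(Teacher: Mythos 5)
Your proof is correct and follows essentially the same skeleton as the paper's: the trivial implications plus the homomorphism property of conjugation to pass between generators and the full group, $(i)\Rightarrow(iii)$ via \cref{app:lem:logical_gates_inverse} and the pointwise computation $USU^\dagger\ket{\psi}=\ket{\psi}$, and $(iii)\Rightarrow(i)$ via the $+1$-eigenspace argument. Two of your refinements deserve mention because they tighten spots the paper glosses over. First, the paper passes directly from ``$USU^\dagger$ fixes $\mathcal{L}$ pointwise'' to ``$USU^\dagger\in\mathcal{S}$''; this implicitly uses exactly the fact you make explicit with the projector identity $\Pi = 2^{-(n-k)}\sum_{S\in\mathcal{S}}S$ and linear independence of distinct Pauli operators (and it silently requires $-\mathbbm{1}\notin\mathcal{S}$, which holds for any valid stabilizer code), so your auxiliary lemma is a genuine improvement in rigor. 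Second, in $(iii)\Rightarrow(i)$ the paper writes ``by assumption, $US=SU$ for all $S\in\mathcal{S}$,'' but condition (iii) only says that conjugation maps $\mathcal{S}$ into itself — it permutes the stabilizer group rather than commuting with each element; your repair via injectivity of conjugation and finiteness of $\mathcal{S}$, giving $U\mathcal{S}U^\dagger=\mathcal{S}$ and hence $S_iU\ket{\psi}=U(U^\dagger S_iU)\ket{\psi}=U\ket{\psi}$, is the correct way to close this small gap. Finally, you interpret (ii) in the stronger sense that the \emph{induced} logical action is a $k$-qubit Clifford and prove it through the normalizer characterization of logical Paulis; the paper instead reads (ii) at face value (since $U\in\Cn$ is hypothesized, a logical gate is automatically a logical Clifford gate, making $(i)\Leftrightarrow(ii)$ immediate), so the step you anticipated as the main obstacle is not needed for the paper's reading — but your argument establishes a strictly stronger and more informative conclusion, and it is sound, including the point about tracking signs via the logical Pauli frame.
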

\begin{proof} The implications 
    ``\textit{(i)$\,\Rightarrow$(ii)}'' and 
    ``\textit{(iii)$\,\Rightarrow$(iv)}'' are trivial.
    The reverse implications follow from the fact that $UPU^\dagger$ is a Pauli operator whenever $P$ is, and from a standard argument that expands an arbitrary stabilizer operator $S\in \mathcal{S}$ into a product of stabilizer generators. 
    Let us now prove the remaining implications.
    \vspace{5pt}\newline
    \textit{(i)$\,\Rightarrow$(iii):} 
    Let $S\in \mathcal{S}$.
    Since $U$ is a Clifford gate, $USU^\dagger$ is a Pauli operator. 
    Let us show that $USU^\dagger$ stabilizes the code space.
    Thus, let $\ket{\smash{\psi}}\in \mathcal{L}$ be an arbitrary code word. 
    From \cref{app:lem:logical_gates_inverse}, we know that $U^\dagger$ is a logical operator.
    Hence, we have $U^\dagger \smash{\ket{{\psi}}}\in \mathcal{L}$ and, therefore, 
    $SU^\dagger \smash{\ket{{\psi}}} =U^\dagger \smash{\ket{{\psi}}} $.
    This, in turn, implies  $USU^\dagger \smash{\ket{{\psi}}} = UU^\dagger {\ket{{\psi}}} = \ket{\smash{\psi}}$.
    In other words, $USU^\dagger$ is 
    a stabilizer operator.
    \vspace{5pt}\newline
    \textit{(iii)$\,\Rightarrow$(i):} 
    Let $\ket{\smash\psi}\in \mathcal{L}$.
    We have to show  $U\ket{\smash{\psi}}\in \mathcal{L}$.
    By assumption, we have $US =SU$ for all $S\in \mathcal{S}$.
    This implies $U\ket{\smash{\psi}}  = US\ket{\psi}= SU\ket{\psi}$.
    In other words, the state vector $U\ket{\psi}$ lies in the $+1$-eigenspace of all operators $S\in \mathcal{S}$, i.e., in the code space $\mathcal{L}$.
    This finishes the proof.
\end{proof}

Finally, we apply Schur's lemma~\cite{fulton_harris_representation_theory_2013} to prove that   Clifford gates act the same on the logical level iff they transform the logical Pauli operators identically, up to stabilizers.
\begin{lemma}[Equivalence of different logical 
Clifford
gates] \label{app:lem:different_logical_clifford_gates}
   Let $U,V\in \Cn$ be two logical Clifford gates for $\mathcal{L}$.
   The following conditions are equivalent:
    \begin{itemize}
        \item[(i)] There is a global phase $\alpha\in\RR$ such that for all $\ket{\psi}\in \mathcal{L}$ it holds  $U\ket{\psi} = e^{\iu \alpha} V\ket{\psi}$.
        
        \item[(ii)] For every logical qubit $i \in \{1,\ldots, k\}$, there exist stabilizer operators $S,S' \in \mathcal{S}$ with $U \overline{X_i}U^\dagger = V\overline{X_i}V^\dagger S $ and   $U \overline{Z_i}U^\dagger =  V\overline{Z_i}V^\dagger S'$. 
    \end{itemize}
\end{lemma}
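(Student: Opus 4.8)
The plan is to prove the two implications separately, handling \emph{(i)}$\Rightarrow$\emph{(ii)} by a direct computation on the code space and \emph{(ii)}$\Rightarrow$\emph{(i)} via Schur's lemma.

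For \emph{(i)}$\Rightarrow$\emph{(ii)}, I would first note that since $U$ and $V$ are logical gates, their restrictions to $\mathcal{L}$ are unitary and, by \cref{app:lem:logical_gates_inverse}, so are those of $U^\dagger$ and $V^\dagger$. Writing (i) as an equality of unitaries on $\mathcal{L}$ and passing to inverses yields the adjoint relation $U^\dagger\ket{\psi}=e^{-\iu\alpha}V^\dagger\ket{\psi}$ for all $\ket{\psi}\in\mathcal{L}$. Then, for a fixed logical qubit $i$ and any $\ket{\psi}\in\mathcal{L}$, I would evaluate $U\overline{X_i}U^\dagger\ket{\psi}$ by inserting these two relations and using that $\overline{X_i}$ preserves $\mathcal{L}$; the phases $e^{\pm\iu\alpha}$ cancel and one obtains $U\overline{X_i}U^\dagger\ket{\psi}=V\overline{X_i}V^\dagger\ket{\psi}$. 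Since $U,V$ are Clifford, $P:=U\overline{X_i}U^\dagger$ and $P':=V\overline{X_i}V^\dagger$ are Pauli operators, so $(P')^\dagger P$ is Pauli and acts as the identity on every code word. The key fact I would invoke is the standard characterization that a Pauli operator fixing all of $\mathcal{L}$ pointwise must lie in $\mathcal{S}$; this gives $(P')^\dagger P=S\in\mathcal{S}$, i.e.\ $U\overline{X_i}U^\dagger=V\overline{X_i}V^\dagger S$. Repeating with $\overline{Z_i}$ in place of $\overline{X_i}$ produces the corresponding $S'$, establishing \emph{(ii)}.

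For \emph{(ii)}$\Rightarrow$\emph{(i)}, I would set $W=V^\dagger U$, which maps $\mathcal{L}$ to $\mathcal{L}$ by \cref{app:lem:logical_gates_inverse}. Conjugating the first relation of \emph{(ii)} by $V^\dagger$ gives $W\overline{X_i}W^\dagger=\overline{X_i}\,\tilde S$ with $\tilde S:=V^\dagger S V$, and since $V$ is a logical Clifford gate it permutes $\mathcal{S}$ by \cref{app:lem:logical_clifford_gates}, so $\tilde S\in\mathcal{S}$. Applying this operator identity to a code word of the form $W\ket{\chi}$ with $\ket{\chi}\in\mathcal{L}$, and using that $\tilde S$ acts trivially on $\mathcal{L}$, I would obtain the genuine commutation $W\overline{X_i}\ket{\chi}=\overline{X_i}W\ket{\chi}$; the analogous step for $\overline{Z_i}$ gives commutation with every logical $Z$. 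Hence $W|_{\mathcal{L}}$ commutes with all logical Pauli operators restricted to $\mathcal{L}$.

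Finally, I would invoke irreducibility: the restrictions of $\overline{X_1},\overline{Z_1},\ldots,\overline{X_k},\overline{Z_k}$ to $\mathcal{L}$ generate a representation of the $k$-qubit Pauli group that is unitarily equivalent to the defining, irreducible representation on $(\mathbb{C}^2)^{\otimes k}$. Schur's lemma then forces $W|_{\mathcal{L}}=\lambda\,\id$ for a scalar $\lambda$, and unitarity of $W|_{\mathcal{L}}$ gives $\lambda=e^{\iu\alpha}$, which is precisely \emph{(i)}. The main obstacle I anticipate lies in this last direction: one must carefully justify that the logical Paulis act irreducibly on $\mathcal{L}$ and that the stabilizer $\tilde S$ is cleanly absorbed so that the commutation holds exactly on $\mathcal{L}$ rather than merely up to stabilizers. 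The phase cancellation in \emph{(i)}$\Rightarrow$\emph{(ii)} and the Pauli-fixing-the-code-space lemma are comparatively routine.
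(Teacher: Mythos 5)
Your proposal is correct and follows essentially the same route as the paper: for \emph{(i)}$\Rightarrow$\emph{(ii)} a direct computation on code words in which the phases cancel, combined with the standard fact that a Pauli operator fixing $\mathcal{L}$ pointwise lies in $\mathcal{S}$ (which the paper also invokes, implicitly), and for \emph{(ii)}$\Rightarrow$\emph{(i)} an application of Schur's lemma to $V^\dagger U$ restricted to $\mathcal{L}$, using that the stabilizer $\tilde S$ acts trivially on the code space to obtain exact commutation with the (irreducible) logical Pauli representation. Your added details---the adjoint relation $U^\dagger\ket{\psi}=e^{-\iu\alpha}V^\dagger\ket{\psi}$ via \cref{app:lem:logical_gates_inverse} and the explicit use of \cref{app:lem:logical_clifford_gates} for $\tilde S=V^\dagger S V\in\mathcal{S}$---are sound and merely make explicit steps the paper compresses.
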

\begin{proof}$ $
    \vspace{5pt}\newline
    \textit{(i)$\,\Rightarrow$(ii):} 
    It suffices to show that  $P=U\overline{X_i}U^\dagger V\overline{X_i} V^\dagger$ is a stabilizer operator; the case of $\overline{Z_i}$ can be treated the same.
    Thus, let $\ket{\psi}\in \mathcal{L}$ be an arbitrary code word.
    By assumption, we have $P\ket{\psi}=(e^{\iu \alpha}V)X_i(e^{\iu \alpha} V)^\dagger V\overline{X_i}V^\dagger \ket{\psi}$ because this calculation takes place in $\mathcal{L}$ entirely.
    Therefore, $P\ket{\psi}=\ket{\psi}$, which implies $P\in \mathcal{S}$,
    as claimed.
    \vspace{5pt}\newline
    \textit{(ii)$\,\Rightarrow$(i):} 
    We want to apply Schur's lemma to show that the linear map $f: \mathcal{L} \rightarrow \mathcal{L}, \ket{\psi}\mapsto V^\dagger U\ket{\psi}$ is proportional to $\id_\mathcal{L}$.
    Then, the proportionality constant must be of the form $e^{\iu \alpha}$ because $f$ is unitary.
    For this, we point out that the representation $g: G= \langle \overline{X_i},\overline{Z_i} \ \vert \ i\in \{1,\ldots, k\}\rangle \rightarrow \GL(\mathcal{L})$ that sends an $n$-qubit Pauli operator to itself is irreducible.
    We need to show that $f$ and $g$ commute.
    Thus, let $P\in G$ and $\ket{\psi}\in \mathcal{L}$ be arbitrary.
    By assumption, there is some $S\in \mathcal{S}$ with $UPU^\dagger = VPV^\dagger S$.
    This yields
    $ f(g(P)\ket{\psi}) = V^\dagger U P \ket{\psi} = V^\dagger (U P U^\dagger) U \ket{\psi} 
    = V^\dagger (VPV^\dagger S)U \ket{\psi} =  PV^\dagger U\ket{\psi}  = g(P)f(\ket\psi)$.
    Therefore, Schur's lemma applies, which finishes the proof.
\end{proof}


\section{Proof of \Cref{thm:gauge_freedom}} \label{app:sec:proof_gauge_freedom}

In this appendix, we prove \Cref{thm:gauge_freedom}. 
More precisely, we show that
\begin{align} \label{app:eq:definition_freedom_gauge_group}
\mathcal{F}=\{F \in \Sp(\FF_2^{2n}) \ \vert \ F \text{ obeys \cref{eq:freedom_matrix}}\}
\end{align}
serves as the gauge group for logical Clifford operations whose action is specified only on the logical subspace and may differ outside the code space.
Moreover, we prove that the cardinality of $\mathcal{F}$ is given by the expression in \cref{eq:freedom_number}.

\begin{proof}
Our first claim is that $\mathcal{F}$ is a group.
To show this, we write the elements $F \in \mathcal{F}$ in block form as in
\begin{align}
    F = 
        \begin{bmatrix}
         F^{xx} & 0_n \\   
         F^{zx} & F^{zz}    
        \end{bmatrix},
\end{align}
where $0_n\in \FF_2^{n\times n}$ denotes the all-zero matrix.
The constraints from \cref{eq:freedom_matrix} on the blocks are given by
\begin{align} \label{eq:freedom_matrix_block_xx_and_zz}
    F^{xx} =  \begin{bmatrix}
        \mathbbm 1_{k} & \ast &\cdots & \ast \\
        0 & \ast & \cdots & \ast \\
        \vdots & \vdots & \ddots & \vdots \\
        0 & \ast & \cdots & \ast \\
    \end{bmatrix},
    \hspace{5mm}
F^{zz} &=  \begin{bmatrix}
        \mathbbm 1_{k} & 0 &\cdots & 0 \\
        \ast & \ast & \cdots & \ast \\
        \vdots & \vdots & \ddots & \vdots \\
        \ast & \ast & \cdots & \ast \\
    \end{bmatrix}, 
    \\
\text{and } \hspace{2mm}   
    F^{zx} &=  \begin{bmatrix}
         0_{k} & \ast &\cdots & \ast \\
        \ast & \ast & \cdots & \ast \\
        \vdots & \vdots & \ddots & \vdots \\
        \ast & \ast & \cdots & \ast \\
    \end{bmatrix}.
    \label{eq:freedom_matrix_block_zx}
\end{align}
Clearly, $F = \mathbbm1_{2n}$ fulfills these constraints, proving $\mathbbm1_{2n} \in \mathcal{F}$. 
Taking the product of two matrices $F,\tilde F \in \mathcal{F}$ results in
\begin{align}
    F\tilde F = \begin{bmatrix}
        F^{xx} \tilde{F}^{xx} & 0_n \\
        F^{zx} \tilde{F}^{xx}  + F^{zz} \tilde{F}^{zx}  & F^{zz} \tilde{F}^{zz} 
    \end{bmatrix}.
\end{align}
It is straightforward to verify that $F^{xx} \tilde{F}^{xx}$ and
$F^{zz} \tilde{F}^{zz}$ inherit the constraints of \cref{eq:freedom_matrix_block_xx_and_zz}.
Similarly, both $F^{zx} \tilde{F}^{xx} $ and $ F^{zz} \tilde{F}^{zx}$ 
obey the constraints of \cref{eq:freedom_matrix_block_zx}, which proves $F\tilde F\in \mathcal{F}$.
This shows that $\mathcal{F}$ is closed under taking products. 
Hence, it is also closed under taking inverses because $\mathcal{F} \subset \FF_2^{2n\times 2n}$ is clearly finite.
This proves that $\mathcal{F}$ is a group.

As mentioned in the main text, $\mathcal{F}$ can be understood as the subgroup of non-Pauli Clifford stabilizers of the code space $\mathcal{L}$ of an $\llbracket n,k,d\rrbracket$ code.
Next we will show that  $\mathcal{F}$ is in bijection to the different choices of physical Clifford operators (modulo Paulis) that implement a given logical Clifford gates.
In that sense, the elements of $\mathcal{F}$ correspond to different gauges of logical Clifford operators.
To distinguish it from the concept of \emph{gauge groups} in subsystem QECCs~\cite{lidar_brun_quantum_error_2013, vuillot_code_deformation_2019, li_2d_compass_2019},
we refer to $\mathcal{F}$ as the \emph{freedom gauge group} throughout this paper.

We need to show that $U_{EC'FE^{-1}}$ implements $U_C\in \mathcal{C}^k$ on the logical level whenever $F\in \mathcal{F}$ and, conversely, 
that every $n$-qubit Clifford operation that does so is of the form  $U_{EC'FE^{-1}}$ for some $F\in \mathcal{F}$.
For both statements we will make use of the fact that the encoding circuit $U_E$ maps Pauli-$Z$ operators on qubits $k+1$ to $n$ to stabilizers $S\in \langle S_1, \ldots, S_{n-k}\rangle $ and arbitrary Pauli operators on qubits $1$ to $k$ to logical Pauli operators $\bar P \in \langle \bar X_1,\bar Z_1,\ldots,\bar X_k,\bar Z_k \rangle$.
To make this more precise, we introduce binary vectors $\mathbf{s}_j, \mathbf{s}'_j, \mathbf{x}_i, \mathbf{x}'_i, \mathbf{z}_i, \mathbf{z}'_i \in \FF_2 ^n$ such that $S_{j-k} \propto X^{\mathbf{s}_j} Z ^{\mathbf{s}'_j}$, for all $j\in \{k+1,\ldots, n\}$ and $\bar X_i \propto X^{\mathbf{x}_i} Z ^{\mathbf{x}'_i}$, $\bar Z_i \propto X^{\mathbf{z}_i} Z ^{\mathbf{z}'_i}$ for all $i\in\{1,\ldots, k\}$, see \cref{eq:encoding_matrix_details}.
Then, we have 
\begin{align}
    E 
    \begin{bmatrix}
    \mathbf{e}_i \\ 0    
    \end{bmatrix}
    =
    \begin{bmatrix}
        \mathbf{x}_i \\ \mathbf{x}'_i
    \end{bmatrix},
    \hspace{2mm}
   E 
    \begin{bmatrix}
    0 \\ \mathbf{e}_i     
    \end{bmatrix}
   & =
    \begin{bmatrix}
        \mathbf{z}_i \\ \mathbf{z}'_i
    \end{bmatrix},
    \label{app:eq:unencoded_pauli_operators}
  \\\label{app:eq:unencoded_stabilizer_operators}
  \hspace{2mm}
    \text{and}
    \hspace{2mm}
    E 
    \begin{bmatrix}
    0 \\ \mathbf{e}_j     
    \end{bmatrix}
    &=
    \begin{bmatrix}
        \mathbf{s}_j \\ \mathbf{s}'_j
    \end{bmatrix},
\end{align}
where $\mathbf{e}_l = (\delta _ {l,l'})_{l'=1}^{n} \in \FF_2^n$ 
denotes a standard basis vector.

First, assume we have a Clifford operation $U_M\in \Cn$ that is represented by $M = EC'FE^{-1}$ for some $F \in \mathcal{F}$.
We have to show that $U_M$ acts as $C$ on the logical 
level.
For every $j \in \{1,\ldots, n-k\}$, we find that $U_M S_j U_M^\dagger$ is represented by
\begin{align}
M\begin{bmatrix}
    \mathbf{s}_j \\ \mathbf{s}'_j
\end{bmatrix}
=&E\sum_{l=k+1}^{2n} F_{l+n,j+n}\begin{bmatrix}
    0 \\ \mathbf{e}_l
\end{bmatrix},
\end{align}
where we have used \cref{app:eq:unencoded_stabilizer_operators} and the fact that $F^{xz}=0$, which holds by definition of  $F\in \mathcal{F}$.
Again applying \cref{app:eq:unencoded_stabilizer_operators}, we further find
\begin{align}
M\begin{bmatrix}
    \mathbf{s}_j \\ \mathbf{s}'_j
\end{bmatrix}
=&\sum_{l=k+1}^{2n} F_{l+n,j+n}\begin{bmatrix}
    \mathbf{s}_l \\ \mathbf{s}'_l,
\end{bmatrix}
\end{align}
which implies $U_M S_j U_M^\dagger\in \mathcal{S}$. 
Therefore, \cref{app:lem:logical_clifford_gates} applies,
and shows that $U_M$ is a logical operator.
To determine its logical action, let us first compute the vectors that represents $U_M \overline{X_i}U_M^\dagger $ for all $i\in \{1,\ldots, k\}$, that is 
\begin{align}
    M\begin{bmatrix}
    \mathbf{x}_i \\ \mathbf{x}'_i
\end{bmatrix}&=EC'\begin{bmatrix}
    \mathbf{e}_i \\ 0
\end{bmatrix}+\sum_{l=k+1}^{2n} F_{l+n,i}\begin{bmatrix}
    \mathbf{s}_l \\ \mathbf{s}'_l
\end{bmatrix}.
\label{app:eq:transformed_x}
\end{align}
The first term in \cref{app:eq:transformed_x} represents the logical Pauli operator to which the $i$-th Pauli-$X$ operator is transformed into,
while the second term reflects that this mapping is defined modulo stabilizers.
Similarly, we find 
\begin{align}
U_M \overline{Z_i}U_M^\dagger = \overline{U_CZ_iU_C^\dagger}S'_i
\end{align}
for some $S'_i\in \mathcal{S}$. 
Therefore, \cref{app:lem:different_logical_clifford_gates} applies, and we have shown that $U_M$ acts as $U_C\in \mathcal{C}^k$ on the logical level.

Next, we show the converse that every Clifford operation $U_M\in \Cn$ that implements $U_C\in \mathcal{C}^k$ on the logical level can be written as $M={EC'FE^{-1}}$ for some freedom matrix $F\in \mathcal{F}$.
Our strategy is to translate the constraints on $M$ imposed by \cref{app:lem:logical_clifford_gates,app:lem:different_logical_clifford_gates} into the constraints on $F$ that are shown in \cref{eq:freedom_matrix}.
To this end, we define the freedom matrix $F = C'^{-1}E^{-1}ME$.
To prove $F\in \mathcal{F}$, we first apply $F$ to the unit vector corresponding to the unencoded $i$-th logical Pauli-$X$ operator from \cref{app:eq:unencoded_pauli_operators}.
By \cref{app:eq:unencoded_pauli_operators}, this yields
\begin{align}
    F\begin{bmatrix} \mathbf{e}_i \\ 0\end{bmatrix}&=
C'^{-1}E^{-1}M\begin{bmatrix}
    \mathbf{x}_i \\ \mathbf{x}'_i
\end{bmatrix}\\
\end{align}
Next, we apply \cref{app:lem:different_logical_clifford_gates} to arrive at
\begin{align}
F\begin{bmatrix} \mathbf{e}_i \\ 0\end{bmatrix}
&=\begin{bmatrix}
    \mathbf{e}_i \\ 0
\end{bmatrix}+\sum_{l=k+1}^n f_{l+n,i} \begin{bmatrix}
    0 \\ \mathbf{e}_l
\end{bmatrix},
\nonumber
\end{align}
which proves the restrictions on $F$ shown in \cref{eq:freedom_matrix} for columns $1$ to $k$.
Similarly, by repeating the calculation for the logical Pauli-$Z$ operators, we obtain the corresponding restrictions for columns $n+1$ to $n+k$.
Finally, we analyze the constraints on $F$ that are imposed by how $U_M$ is allowed to transform stabilizer operators.  
By \cref{app:lem:logical_clifford_gates}, we have
\begin{align}
    F\begin{bmatrix}
    0 \\ \mathbf{e}_j
\end{bmatrix}&=C'^{-1}E^{-1}M\begin{bmatrix}
    \mathbf{s}_j \\ \mathbf{s}'_j
\end{bmatrix}\\
&=\sum_{l=k+1}^n f_{l+n,j+n} \begin{bmatrix}
    0 \\ \mathbf{e}_l 
\end{bmatrix},
\nonumber
\end{align}
which proves that also columns $n+k+1$ to $2n$ have to be of the form given in \cref{eq:freedom_matrix}.
Since there are no constraints, besides $F\in \Sp(\FF_2^{2n})$, on columns $k+1$ to $n$, this finishes the proof of $F\in \mathcal{F}$.

Finally, let us compute the order $\vert \mathcal{F}\vert $ of the freedom gauge group.
Because every freedom matrix $F\in \mathcal{F}$ is invertible, the same must be true about the block matrix $F^{zz}$.
By \cref{eq:freedom_matrix_block_xx_and_zz}, 
this is the case iff the submatrix of size
$(n-k)\times(n-k)$ in the bottom right of $F^{zz}$ is invertible. 
Besides this, there are no further constraints on the columns vectors of $F^{zz}$.
Thus, there are
\begin{equation}
| \FF_2^{n-k}|^k \times   |\operatorname{GL}(\FF_2^{n-k})|=2^{k(n-k)}\prod_{i=0}^{n-k-1}(2^{n-k}-2^i)
\end{equation}
possible choices for $F^{zz}$.
Next, we write out the condition $F\in \Sp(\FF_2^{2n})$, i.e.,
$     F^T \left[\begin{smallmatrix}
        0&\mathbbm 1\\
        \mathbbm 1&0
    \end{smallmatrix}\right] 
    F =
    \left[\begin{smallmatrix}
        0&\mathbbm 1\\
        \mathbbm 1&0
    \end{smallmatrix}\right]$,
which yields
\begin{align}    \label{app:eq:freedom_matrix_constraint2}
    (F^{xx})^TF^{zz} &= \mathbbm 1  \text{  \ \ and}\\
    \label{app:eq:freedom_matrix_constraint}
    \left((F^{xx})^TF^{zx}\right)^T &= (F^{xx})^TF^{zx}.
\end{align}
By \cref{app:eq:freedom_matrix_constraint2}, 
the submatrix $F^{xx} = ((F^{zz})^T)^{-1}$ is uniquely determined through the choice of $F^{zz}$.
\Cref{app:eq:freedom_matrix_constraint} means that $(F^{xx})^TF^{zx}$ must be a symmetric matrix.
Since we can regard $(F^{xx})^T$ as a bijective map, 
the number of allowed choices for $(F^{xx})^TF^{zx}$ and $F^{zx}$ are identical.
There are in total $2^{n(n+1)/2}$ symmetric binary $n\times n$-matrices,
however, not all of them are allowed by \cref{eq:freedom_matrix}.
Rows $1$ to $k$ of $F^{zz}$ and map columns $1$ to $k$ of $(F^{xx})^TF^{zx}$ to zero in $F^{zx}$, since $F^{zz}[(F^{xx})^TF^{zx}]=F^{zx}$.
This reduces the number of free variables of $(F^{xx})^TF^{zx}$ by $k(k+1)/2$ for a given choice of $F^{zz}$ and $F^{xx}$.
In total, this shows that the order of the freedom gauge group $\mathcal{F}$ is indeed given by the expression in \cref{eq:freedom_number},
which finishes the proof of \Cref{thm:gauge_freedom}.
\end{proof}

This proof contains an explicit (albeit somewhat opaque) enumeration of all elements in the freedom gauge group $\mathcal{F}$.
To better understand the role of  $\mathcal{F}$, 
we rearrange \cref{eq:freedom_number} into 
\begin{align}
    \begin{split}
    |\mathcal{F}|=&\underbrace{2^{2k(n-k)}}_{\text{(1)}}
    \underbrace{2^{(n-k)(n-k+1)/2}}_{\text{(2)}}
    \underbrace{\prod_{i=0}^{n-k-1}\left(2^{n-k}-2^{i}\right)}_{\text{(3)}}.\label{app:eq:freedom_number_sorted}
\end{split}
\end{align}
Factor (3) in \cref{app:eq:freedom_number_sorted} is the number of ways in which the $n-k$ stabilizer generators can be mapped to a different choice of stabilizer generators.
Similarly, factor (1) is the number of ways to correctly transform $2k$ logical Pauli operators modulo stabilizers,
while factor (2) characterizes the additional freedom provided by the transformation of Pauli errors.

\section{Empirical runtimes of the IQCP solver} \label{app:sec:runtime}

\renewcommand{\arraystretch}{1.5}
\begin{table}
    \centering
    \begin{tabular}{|c||c|c|c|c|c|}
    \hline
         Code & Gate & Length $l$ & $\CZ$ count & Found & Optimality\\\hline\hline
          \multirow{2}{*}{$\llbracket 12,2,3\rrbracket$}&\multirow{2}{*}{$\overline{\CX_{2,1}}$}&$2$&$11$&$45\unit{min}$&$119\unit{min}$\\
         &&$3$&$9$&$76\unit{h}$&/\\\hline
         \multirow{7}{*}{$\llbracket 8,3,2\rrbracket$}& \multirow{2}{*}{$\overline{H_1}$} & $3$ & $7$ & $8\unit{min}$ & $51\unit{min}$\\
         & & $4$ & $7$ & $21\unit{h}$ & / \\\cline{2-6}
         & \multirow{2}{*}{$\overline{{H}^{\otimes 2}_{1,2}}$} & $3$ & 8 & $86\unit{min}$ & $90\unit{min}$\\
         & & $4$ & 8 & $90\unit{h}$ & $163\unit{h}$ \\\cline{2-6} 
         & \multirow{1}{*}{$\overline{{H}^{\otimes 3}_{1,2,3}}$} & $3$ & $15$ & $20\unit{h}$&/\\\cline{2-6}
         & \multirow{2}{*}{$\overline{{S}_{1}}$}&$1$&$1$&$<1\unit{s}$&$<1\unit{s}$\\
         && $3$ & $1$ & $40\unit{s}$ & $40\unit{s}$ \\\cline{2-6}
         & \multirow{1}{*}{$\overline{{(HS)}_1}$}&$3$&$6$&$61\unit{min}$&$66\unit{min}$\\\hline
         \multirow{1}{*}{$\llbracket 16,6,2\rrbracket$}&\multirow{1}{*}{$\overline{\CZ_{1,4}}$}&$3$&$5$&$48\unit{min}$&$119\unit{min}$\\\hline
    \end{tabular}
    \caption{\justifying
    Empirical runtimes of the Gurobi-based IQCP solver used to construct hardware-tailored logical gate implementations for various codes presented in the main text.
    All calculations were carried out on four cores of an Intel Xeon CPU E5-2695 v2 @$2.40\unit{GHz}$ with $20\unit{GB}$ of RAM.
    Note that different logical gates can be constructed in parallel, 
    hence, circuits for a meaningful experiment can often be obtained within hours or days.
    }
    \label{app:tab:runtimes}
\end{table}

In this appendix, 
we present runtimes of our open-source implementation of the Gurobi-based IQCP solver for \cref{eq:binary_optimization_problem}.
Note that this runtime should be interpreted as the classical preprocessing cost associated with constructing a hardware-tailored circuit implementation of a desired logical Clifford gate for a given stabilizer code.
Making such circuits fault-tolerant in a subsequent step requires applying the techniques outlined in \cref{sec:fault_tolerance}.
The IQCP solver runtimes for constructing all $720$ Clifford gates for the 
$\llbracket4,2,2\rrbracket$ iceberg code are reported in \cref{tab:iceberg:sim} of the main text.
The runtimes for constructing the remaining circuits presented in the main text are shown in \cref{app:tab:runtimes}.
As already mentioned in \cref{sec:iceberg}, 
the IQCP solver by Gurobi proceeds in two steps.
First, it constructs \emph{some} feasible point for the IQCP, 
then it proves optimality (and potentially replaces the feasible point by a better one).
The times required for this are shown in \cref{app:tab:runtimes} in the columns ``Found'' and ``Optimality'', respectively, where the latter refers to the total runtime (including the time for constructing the initial feasible point).
For some target circuits, we solve the IQCP for more than one length $l$ of the ansatz circuit $U_{A_l}$ in \cref{eq:ansatz_class_unitary}, which results in a solution with a different $\CZ$ count and a different runtime.

For example, constructing the circuit in \cref{fig:tt12:cx} of the main text---%
which implements a controlled-$X$ gate from qubit 2 to qubit 1 for the $\llbracket12,2,3\rrbracket$ twisted toric code---%
required approximately three days.
By reducing the ansatz length to $l=2$, a similar circuit implementation with two additional $\CZ$ gates can be constructed in only 45 minutes, and the optimality (in terms of $\CZ$ count and for the given ansatz) is proven in an additional 74 minutes.
Next, consider the $\llbracket 8,3,2 \rrbracket$ code from \cref{sec:color832} of the main text.
The circuits presented in \cref{fig:color832:gates} were constructed for an ansatz length of $l=3$, 
with solver runtimes ranging from 40 seconds for the logical $S$ gate to 20 hours for the circuit implementing a Hadamard gate on all three logical qubits.
We observe that, for a fixed ansatz length, the solver performs significantly faster when a circuit implementation with a low $\CZ$ count exists.
In the extreme case, the logical $S$ gate requires only a single $\CZ$ gate,
and its circuit can be constructed with $l=1$ in under a second.
Note that this simple circuit implementation can also be obtained using the methods of Ref.~\cite{webster_transversal_diagonal_2023}, since the $S$ gate is diagonal in the computational basis.
For the more challenging Hadamard circuits, we also apply our solver with an increased ansatz length of $l=4$.
However, despite the significantly longer runtime, we do not obtain circuits with a lower $\CZ$ count.
 
Besides the circuits from \cref{fig:color832:gates},
we also construct a logical $\overline{HS}$ gate (not shown) for the $\llbracket 8,3,2 \rrbracket$ code, which  requires approximately 1 hour preprocessing time and uses six $CZ$ gates with an ansatz length of $l=3$.
Compared to the sequence in which the $S$ gate and then the Hadamard gate from \cref{fig:color832:gates}a and d are applied, the direct implementation of the combined logical $\overline{HS}$ gate saves two $\CZ$ gates.
This demonstrates the flexibility of our framework in constructing fully compiled circuits, enabling, for instance, faster access to the $Y$ basis---%
an improvement with important application~\cite{gidney_inplace_access_2024}.

In the last row of \cref{app:tab:runtimes}, we report a runtime of 48 minutes for constructing a logical $\CZ$ gate between two distinct blocks of the $\llbracket8,3,2\rrbracket$ code.
This circuit implication, which is provided in our \href{https://github.com/erkue/htlogicalgates/tree/main/examples}{GitHub repository}, serves as yet another example of the flexibility of our framework in directly tackling the addressability problem of delocalized logical qubits.


\section{Circuit-level noise simulation} \label{app:sec:simulation}

In this appendix, we provide details about the circuit-level noise simulations that we carried out to produce \cref{fig:color832:sim} in the main text.
The goal of these simulations is twofold: to numerically verify that our logical Hadamard circuits for the $\llbracket 8,3,2 \rrbracket$ code are indeed fault-tolerant (FT), and to compare their performance against an existing protocol~\cite{wang_fault_tolerant_2024}.
All simulations were carried out with {stim}~\cite{gidney_stim_2021},
using the same noise model as in Tab.~3 of Ref.~\cite{gidney_inplace_access_2024}.
Here, all error probabilities (of gates, measurements, resets, etc.) are proportional to a single physical error parameter called $p$.
This   parameter serves as the $x$-axis of \cref{fig:color832:sim}.
As a slight modification of the error model in Ref.~\cite{gidney_inplace_access_2024}, 
we extend our basis gate set to contain all single-qubit Clifford gates, controlled-$X$, -$Y$, and -$Z$ gates as well as single-qubit measurement and reset operations for both $\ket{0}$ and $\ket{+}$.

Before running a circuit-level error simulation, we must specify a FT circuit that includes (1) FT state preparation, 
(2) FT logical gates,
(3) FT stabilizer measurements,
and (4) FT measurement of logical Pauli operators.

Let us explain how we select these components to study logical Hadamard gates for the $\llbracket8,3,2\rrbracket$ code.
First, we need FT state preparation circuits.
To prepare the logical state vector $\ket{\overline{0,0,0}}$, we can initialize every physical qubit in $\ket{0}$ and perform a FT measurement (see below) of the only $X$-type stabilizer operator, $X^{\otimes 8}$.
If instead we wish to prepare $\ket{\overline{+, +, +}}$, we use the circuit shown in Fig.~3 of Ref.~\cite{menendez_implementing_fault_2024}.
Second, we need FT logical gates.
For this, we work either with our new FT gates constructed in the main text or with the teleportation-based construction from Ref.~\cite{wang_fault_tolerant_2024}.
Recall from \cref{sec:fault_tolerance} in the main text,
that we define a logical gate for a code with distance $d=2$ to be FT if every single fault results in a detectable error.
A detectable error, however, is not the same as a detected error.
To effectively remove error mechanisms from a circuit, 
we need to detect the errors by performing a round of FT stabilizer measurements.
Here, a stabilizer extraction circuit is considered to be FT if any fault propagates into a detectable error.
For the $\llbracket 8,3,2\rrbracket$ code, 
this can be ensured by employing a flag construction similar to that in \cref{lem:ft_flag}, with the roles of flags 1 and 2 taken by the auxiliary qubit and the flag, respectively.
Finally, we can implement FT measurement of logical Pauli operators by performing a round of stabilizer measurements, followed by reading out physical qubits in the basis corresponding to any representative of the logical operator.
Here, we save resources by inferring the syndromes of stabilizers that commute with the measured logical operator from the physical measurement outcomes, rather than measuring those stabilizers with a flag-FT stabilizer extraction circuit.
For example, to perform a FT measurement of $\overline{Z_1}$, $\overline{Z_2}$, and $\overline{Z_3}$, 
we execute a flag-FT stabilizer measurement for $X^{\otimes 8}$ 
before we read out all eight physical qubits in the computational basis. 
For the FT measurement of $\overline{X_1}$, $\overline{X_2}$, and $\overline{X_3}$,  we could, in principle, proceed similarly, 
however, instead we execute the time-reversed circuit of the 
state preparation circuit for $\ket{\overline{\mathord{+},\mathord{+},\mathord{+}}}$ from Ref.~\cite{menendez_implementing_fault_2024}.

Finally, we combine these building blocks into FT circuits that map $\ket{\overline{0,0,0}}$ to $\ket{\overline{\mathord{+},\mathord{+},\mathord{+}}}$ and vice versa, see
\href{https://github.com/erkue/htlogicalgates/tree/main/examples}{https://github.com/erkue/htlogicalgates/tree/main/examples}.


 \end{document}